\newtheorem{thm}{Theorem}
\newtheorem{proposition}[thm]{Proposition}
\newtheorem{remark}[thm]{Remark}
\newtheorem{lemma}[thm]{Lemma}
\newtheorem{cor}[thm]{Corollary}
\newtheorem{proof}{Proof}
\newcommand{\hA}{\hat{A}}
\newcommand{\hB}{\hat{B}}
\newcommand{\hC}{\hat{C}}
\newcommand{\hE}{\hat{E}}
\newcommand{\hG}{\hat{G}}
\newcommand{\hH}{\hat{H}}
\newcommand{\hP}{\hat{P}}
\newcommand{\hT}{\hat{T}}
\newcommand{\hX}{\hat{X}}
\newcommand{\hY}{\hat{Y}}
\newcommand{\hZ}{\hat{Z}}
\newcommand{\ha}{\hat{a}}
\newcommand{\he}{\hat{e}}
\newcommand{\hx}{\hat{x}}
\newcommand{\hy}{\hat{y}}
\newcommand{\hPi}{\hat{\Pi}}
\newcommand{\hPhi}{\hat{\Phi}}
\newcommand{\hrho}{\hat{\rho}}
\newcommand{\hsigma}{\hat{\sigma}}
\newcommand{\mA}{\mathcal{A}}
\newcommand{\mH}{\mathcal{H}}
\newcommand{\mI}{\mathcal{I}}
\newcommand{\mM}{\mathcal{M}}
\newcommand{\mR}{\mathcal{R}}
\newcommand{\mS}{\mathcal{S}}
\newcommand{\mZ}{\mathcal{Z}}
\newcommand{\PC}{P_{\rm C}}
\newcommand{\PE}{P_{\rm E}}
\newcommand{\PI}{P_{\rm I}}
\newcommand{\ident}{\hat{1}}
\newcommand{\Real}{\mathbf{R}}
\newcommand{\POVM}{\mM}
\newcommand{\QED}{\hspace*{0pt}\hfill $\blacksquare$}
\newcommand{\argmin}{\mathop{\rm argmin}}
\newcommand{\Tr}{{\rm Tr}}
\newcommand{\rank}{{\rm rank}}
\newcommand{\supp}{{\rm supp}}
\newcommand{\Ker}{{\rm Ker}}
\def\gauss_sym#1{{\lfloor #1 \rfloor}}
\newcommand{\opt}{{\star}}
\newcommand{\sopt}{{\circ}}
\newcommand{\DPme}{$\rm DP_{me}$}
\newcommand{\PCopt}{Q}
\newcommand{\PCU}[1]{\overline{\PCopt}^{\rm #1}}
\newcommand{\PCUp}{\overline{\PCopt}}
\newcommand{\PCUip}{\overline{\PCopt_p}}
\newcommand{\ul}[1]{\underline{#1}}
\newcommand{\PCLp}{\ul{\PCopt}}
\newcommand{\PCLip}{\ul{\PCopt_p}}
\newcommand{\Proja}{\ul{P_+}}
\newcommand{\Projb}{\overline{P_+}}
\newcommand{\tp}{\tilde{p}}
\renewcommand{\dag}{\dagger}
\renewcommand{\L}{{\rm L}}
\newcommand{\R}{{\rm R}}
\newcommand{\SRM}{{\rm SRM}}
\renewcommand{\dag}{\dagger}
\newcommand{\half}{{\frac{1}{2}}}
\newcommand{\inv}{-}
\begin{document}

\preprint{APS/123-QED}

\title{Upper and Lower Bounds on Optimal Success Probability of Quantum State Discrimination
 with and without Inconclusive Results}%
%\thanks{A footnote to the article title}%

\affiliation{%
 Quantum Information Science Research Center, Quantum ICT Research Institute, Tamagawa University,
 Machida, Tokyo 194-8610, Japan
}%
\affiliation{%
 School of Information Science and Technology,
 Aichi Prefectural University,
 Nagakute, Aichi 480-1198, Japan
}%

\author{Kenji Nakahira}
\affiliation{%
 Quantum Information Science Research Center, Quantum ICT Research Institute, Tamagawa University,
 Machida, Tokyo 194-8610, Japan
}%

\author{Tsuyoshi \surname{Sasaki Usuda}}
\affiliation{%
 School of Information Science and Technology,
 Aichi Prefectural University,
 Nagakute, Aichi 480-1198, Japan
}%
\affiliation{%
 Quantum Information Science Research Center, Quantum ICT Research Institute, Tamagawa University,
 Machida, Tokyo 194-8610, Japan
}%

\author{Kentaro Kato}
\affiliation{%
 Quantum Information Science Research Center, Quantum ICT Research Institute, Tamagawa University,
 Machida, Tokyo 194-8610, Japan
}%

\date{\today}% It is always \today, today,
             %  but any date may be explicitly specified

\begin{abstract}
 We propose upper and lower bounds on the maximum success probability for
 discriminating given quantum states.
 The proposed upper bound is obtained from
 a suboptimal solution to the dual problem of
 the corresponding optimal state discrimination problem.
 We also give a necessary and sufficient condition for the upper bound
 to achieve the maximum success probability;
 the proposed lower bound can be obtained from this condition.
 It is derived that a slightly modified version of the proposed upper bound
 is tighter than that proposed by
 Qiu {\it et al.} {[Phys. Rev. A \textbf{81}, 042329 (2010)]}.
 Moreover, we propose upper and lower bounds on the maximum success probability
 with a fixed rate of inconclusive results.
 The performance of the proposed bounds are evaluated through numerical experiments.
\end{abstract}

% PACS 03.67.Hk: Quantum communication
\pacs{03.67.Hk}% PACS, the Physics and Astronomy
                             % Classification Scheme.
%\keywords{Suggested keywords}%Use showkeys class option if keyword
                              %display desired
\maketitle

%\tableofcontents

\section{Introduction}

%% Background
Discrimination of quantum states is
a basic and important problem in the field of quantum information theory.
The objective of this work is to distinguish between a given finite set of known quantum states
as well as possible.
As is well known, no measurement can discriminate perfectly between non-orthogonal states;
thus, the problem is to find a measurement that minimizes or maximizes
a certain optimality criterion.
Since the pioneering work of Helstrom, Holevo, and Yuen {\it et al.}
\cite{Hol-1973,Hel-1976,Yue-Ken-Lax-1975},
quantum state discrimination problems with several criteria
have been widely investigated.

%% minimum-error measurement, optimal inconclusive measurement
The success probability is one of the most used criteria for discriminating quantum states.
A quantum measurement maximizing the success probability,
which is called a minimum-error measurement,
has been widely investigated.
%has been investigated, and closed-form analytical expressions have been obtained
%for some cases (see, e.g., \cite{Eld-2003-unamb,Her-2007,Kle-Kam-Bru-2010,Ber-Fut-Fel-2012}).
However, closed-form analytical expressions for minimum-error measurements
have only been obtained in some particular cases
(e.g., \cite{Bel-1975,Ban-Kur-Mom-Hir-1997,Usu-Tak-Hat-Hir-1999,Bar-2001,
And-Bar-Gil-Hun-2002,Cho-Hsu-2003,Eld-For-2001}).
Another criterion is based on the inconclusive probability;
a quantum measurement maximizing the success probability with a fixed failure
(i.e., inconclusive) probability
which is called an optimal inconclusive measurement,
has also been investigated \cite{Che-Bar-1998-inc,Eld-2003-inc,Fiu-Jez-2003}.
A minimum-error measurement and an unambiguous measurement that maximizes the success probability
can be regarded as special cases of optimal inconclusive measurements.
Obtaining an optimal inconclusive measurement is generally a more difficult task
than obtaining a minimum-error measurement.
In fact, closed-form analytical expressions for optimal inconclusive measurements
are only known for very special cases
(e.g., \cite{Her-2012,Nak-Usu-Kat-2012-GUInc,Bag-Mun-Oli-Ber-2012,Nak-Kat-Usu-2015-inc,Her-2015-inc}).
Instead of analytical approaches, we can use numerical methods.
It is known that the design of an optimal success probabilities
can be treated as a positive semidefinite programming problems \cite{Eld-Meg-Ver-2003}.
In many cases, an optimal value can be computed in polynomial time
by well known algorithms for solving semidefinite programs
such with interior point methods.
However, in large scale problems, these methods require the vast amount of calculation.

%% Upper / lower bounds
Instead of computing an exact optimal success probabilities,
several previous studies have given its upper and/or lower bounds
\cite{Hay-Leu-Smi-2005,Mon-2007,Hay-Kaw-Kob-2008,Mon-2008,Qiu-2008,Tan-Erk-Gio-Guh-2008,Tys-2009-2,
Qiu-Li-2010,Sug-Has-Hor-Hay-2009}.
These methods are especially useful for large scale problems
of which it is hard to compute an exact value within feasible time;
for example, in Ref.~\cite{Tan-Erk-Gio-Guh-2008},
bounds are effectively used for comparing
optimal success probabilities with different optical states.
In the case of minimum-error measurements,
Qiu {\it et al.} compared some of these upper bounds with each other,
and derived another upper bound \cite{Qiu-Li-2010}, which improves
some upper bounds in some cases.
In contrast, the square root measurement (SRM, also called the pretty good measurement),
is well known as a suboptimal measurement of the success probability criterion;
the success probability of the SRM is a good lower bound on the optimal one.
In the case of optimal inconclusive measurements,
an upper bound on the optimal success probability for binary quantum states
has been derived by Sugimoto {\it et al.} \cite{Sug-Has-Hor-Hay-2009}.

In the present study, new upper and lower bounds on the success probabilities
of minimum-error and optimal inconclusive measurements are derived.
The approach to this derivation exploits the fact that 
the optimal success probabilities are upper bounded by
suboptimal solutions to the dual problems of
optimal state discrimination problems.
We also present a necessary and sufficient condition for this new upper bound to be attainable,
from which the proposed lower bound can be obtained.
In the case of minimum-error measurements,
we show that a slightly modified version of the proposed bound
is tighter than Qiu {\it et al.}'s upper bound.
We also evaluate the performance of the proposed bounds through numerical experiments.
These experiments show that, on average, the proposed upper bound for minimum-error measurements
is tighter than Qiu {\it et al.}'s upper bound,
and the proposed bound for optimal inconclusive measurements
is tighter than Sugimoto {\it et al.}'s one in the case of binary quantum states.

\section{minimum-error and optimal inconclusive measurements}

We consider discrimination between $M$ quantum states
represented by a set of density operators $\{ \hsigma_m \}_{m \in \mI_M}$
with prior probabilities $\{ \xi_m \}_{m \in \mI_M}$, where $\mI_k = \{ 0, 1, \cdots, k-1 \}$.
$\hsigma_m$ satisfies $\hsigma_m \ge 0$ and $\Tr ~\hsigma_m = 1$,
where $\hA \ge 0$, $\hA \ge \hB$, and $\hA \le \hB$ respectively
denote that $\hA$, $\hA - \hB$, and $\hB - \hA$ are positive semidefinite.
To simplify notation, let $\hrho_m = \xi_m \hsigma_m$,
which we refer to as a quantum state.
We can easily verify $\hrho_m \ge 0$, $\Tr ~\hrho_m = \xi_m > 0$ for any $m \in \mI_M$,
and $\sum_{m=0}^{M-1} \Tr~\hrho_m = 1$.
A set of quantum states, $\rho = \{ \hrho_m \}_{m \in \mI_M}$, is referred to as a quantum state set.
Let $\mH$ be the state space of $\rho$, which is the Hilbert space spanned by the supports
of the operators $\{ \hrho_m \}$.

Let us consider a quantum measurement that may return an inconclusive answer,
which can be described by a positive operator-valued measure (POVM)
with $M+1$ detection operators, $\Pi = \{ \hPi_m \}_{m \in \mI_{M+1}}$.
The detection operator $\hPi_m$ with $m \in \mI_M$ corresponds to identification of
the state $\hrho_m$, while $\hPi_M$ corresponds to the inconclusive answer.
It is assumed without loss of generality that $\hPi_m$ is on $\mH$ for any $m \in \mI_{M+1}$.
Let $\POVM$ be the entire set of POVMs on $\mH$ each of which consists of $M+1$ detection operators;
then, any $\Pi \in \POVM$ satisfies
\begin{eqnarray}
 \hPi_m &\ge& 0, ~~~ \forall m \in \mI_{M+1}, \\ \nonumber
  \sum_{m=0}^M \hPi_m &=& \ident, \label{eq:POVM}
\end{eqnarray}
where $\ident$ is the identity operator on $\mH$.

The success probability, $\PC(\Pi)$, the error probability, $\PE(\Pi)$,
and the inconclusive probability, $\PI(\Pi)$, of a POVM $\Pi$ can be represented as
\begin{eqnarray}
 \PC(\Pi) &=& \sum_{m=0}^{M-1} \Tr(\hrho_m \hPi_m), \nonumber \\
 \PE(\Pi) &=& \underset{(m \neq k)}{\sum_{m=0}^{M-1} \sum_{k=0}^{M-1}}
  \Tr(\hrho_m \hPi_k), \nonumber \\
 \PI(\Pi) &=& \sum_{m=0}^{M-1} \Tr(\hrho_m \hPi_M) = \Tr (\hG\hPi_M),
\end{eqnarray}
where $\hG$ is the Gram operator of $\rho$ expressed as
\begin{eqnarray}
 \hG &=& \sum_{m=0}^{M-1} \hrho_m. \label{eq:G}
\end{eqnarray}
The sum of these probabilities is one, i.e.,
\begin{eqnarray}
 \PC(\Pi) + \PE(\Pi) + \PI(\Pi) = 1, \label{eq:PC_PE_PI}
\end{eqnarray}
for any $\Pi \in \POVM$.

An optimal inconclusive measurement $\Pi$ with
the inconclusive probability of $p$ $~(0 \le p \le 1)$ is a measurement maximizing
the success probability $\PC(\Pi)$ under the constraint that $\PI(\Pi) = p$;
i.e., it is an optimal solution to the following optimization problem:
\begin{eqnarray}
 \begin{array}{lll}
  {\rm P:} & {\rm maximize} & \PC(\Pi) \\
  & {\rm subject~to} & \Pi \in \POVM_p \\
 \end{array} \label{eq:inc_main}
\end{eqnarray}
with a POVM $\Pi$, where $\POVM_p$ is the entire set of POVMs, $\Pi \in \POVM$, satisfying $\PI(\Pi) = p$.
In particular, an optimal solution with $p = 0$ is called a minimum-error measurement,
which always satisfies $\hPi_M = 0$.
Let $\PCopt_p$ be the optimal value of problem~P,
i.e.,
\begin{eqnarray}
 \PCopt_p &=& \max_{\Pi \in \POVM_p} \PC(\Pi).
\end{eqnarray}
Also, let $\PCopt = \PCopt_0$, which is equal to the success probability of
a minimum-error measurement.

Problem~P is semidefinite programming,
and its dual problem can be represented as \cite{Eld-2003-inc}:
\begin{eqnarray}
 \begin{array}{lll}
  {\rm DP:} & {\rm minimize} & \Tr~\hZ - a p \\
  & {\rm subject~to} & \hZ \in \mS_a \\
 \end{array} \label{eq:inc_dual}
\end{eqnarray}
with a positive semidefinite operator $\hZ$ on $\mH$ and $a \in \Real_+$,
where $\Real_+$ is the entire set of nonnegative real numbers,
and $\mS_a$ is expressed as
\begin{eqnarray}
 \mS_a &=& \{ \hZ : \hZ \ge \hrho_m ~(\forall~m \in \mI_M), ~ \hZ \ge a \hG \}. \label{eq:inc_UB}
\end{eqnarray}
The optimal value of problem~DP is equal to
that of problem~P, i.e., $\PCopt_p$ \cite{Eld-2003-inc}.
The following inequality thus holds:
\begin{eqnarray}
 \Tr~\hZ - a p &\ge& \PCopt_p, ~~~ \forall a \in \Real_+, \hZ \in \mS_a. \label{eq:TrZap}
\end{eqnarray}
Similarly, the dual problem with $p = 0$ is represented as \cite{Eld-Meg-Ver-2003}:
\begin{eqnarray}
 \begin{array}{lll}
  {\mbox{\rm \DPme:}} & {\rm minimize} & \Tr~\hX \\
  & {\rm subject~to} & \hX \in \mS_0 \\
 \end{array} \label{eq:me_dual}
\end{eqnarray}
with a positive semidefinite operator $\hX$.
As in Eq.~(\ref{eq:TrZap}), we have
\begin{eqnarray}
 \Tr~\hX &\ge& \PCopt, ~~~ \forall \hX \in \mS_0. \label{eq:me_UB}
\end{eqnarray}

\section{Bounds on success probability of minimum-error measurement}

\subsection{Preparation}

Let the spectral decomposition of a Hermitian operator $\hA$ be
$\hA = \sum_n \lambda_n \hE_n$,
where $\lambda_n$ is an eigenvalue of $\hA$,
and $\hE_n$ is the corresponding projection operator.
Let $\hA_+$ be
\begin{eqnarray}
 \hA_+ &=& \sum_{\lambda_n > 0} \lambda_n \hE_n. \label{eq:A+}
\end{eqnarray}
Also, let $\Proja(\hA)$ and $\Projb(\hA)$, respectively, be
\begin{eqnarray}
 \Proja(\hA) &=& \sum_{\lambda_n > 0} \hE_n, ~ \Projb(\hA) = \sum_{\lambda_n \ge 0} \hE_n. \label{eq:Proja}
\end{eqnarray}
In other words, $\Proja(\hA)$ is the projection operator onto the support space of $\hA_+$,
and $\Projb(\hA)$ is the projection operator onto the kernel of $(-\hA)_+$.
From Eq.~(\ref{eq:Proja}), $\Projb(\hA) \ge \Proja(\hA)$ obviously holds.

In preparation for subsequent subsections,
we show the following lemma.

\begin{lemma} \label{lemma:AB_dual}
 Let $\hA$ and $\hB$ be positive semidefinite operators.
 We consider the following optimization problem
 \begin{eqnarray}
  \begin{array}{ll}
   {\rm minimize} & \Tr~\hY \\
   {\rm subject~to} & \hY \ge \hA, \hY \ge \hB \\
  \end{array} \label{eq:AB_dual}
 \end{eqnarray}
 with a variable $\hY$.
 Also, let $\hY^\opt = \hB + (\hA - \hB)_+$;
 accordingly, $\hY^\opt$ is the optimal solution to problem (\ref{eq:AB_dual}).
 In addition, any operator $\hPhi$ with $\ident \ge \hPhi \ge 0$ satisfies
 \begin{eqnarray}
  \Tr~\hY^\opt &\ge& \Tr(\hA \hPhi) + \Tr[\hB(\ident - \hPhi)]. \label{eq:AB_main}
 \end{eqnarray}
 The equality in Eq.~(\ref{eq:AB_main}) holds if and only if
 \begin{eqnarray}
  \Projb(\hA-\hB) \ge \hPhi \ge \Proja(\hA-\hB). \label{eq:AB_main_proj}
 \end{eqnarray}
\end{lemma}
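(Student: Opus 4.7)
The plan is to prove the three claims—feasibility of $\hY^\opt$, the inequality (\ref{eq:AB_main}), and its equality condition—in one coordinated argument, using (\ref{eq:AB_main}) as a weak-duality tool that pins down $\hY^\opt$.

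First I would note that $\hY^\opt$ is feasible: $(\hA-\hB)_+ \ge 0$ gives $\hY^\opt \ge \hB$, and $(\hA-\hB)_+ - (\hA-\hB) = (\hB-\hA)_+ \ge 0$ gives $\hY^\opt \ge \hA$. Next I would establish (\ref{eq:AB_main}) for every feasible $\hY$: for such $\hY$ and any $\hPhi$ with $\ident \ge \hPhi \ge 0$, both $\hY-\hA$ and $\hPhi$ are positive semidefinite, so $\Tr[(\hY-\hA)\hPhi] \ge 0$; similarly $\Tr[(\hY-\hB)(\ident-\hPhi)] \ge 0$. Summing these two inequalities yields $\Tr~\hY \ge \Tr(\hA\hPhi) + \Tr[\hB(\ident-\hPhi)]$.

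The heart of the argument is to maximize the right-hand side of (\ref{eq:AB_main}) over admissible $\hPhi$ and then identify when the maximum is attained. Rewriting the right-hand side as $\Tr~\hB + \Tr[(\hA-\hB)\hPhi]$ and using the spectral decomposition $\hA-\hB = \sum_n \lambda_n \hE_n$, we obtain
\begin{eqnarray*}
 \Tr[(\hA-\hB)\hPhi] = \sum_n \lambda_n \Tr(\hE_n \hPhi \hE_n),
\end{eqnarray*}
with each $\Tr(\hE_n\hPhi\hE_n) \in [0,\Tr~\hE_n]$ since $0 \le \hE_n\hPhi\hE_n \le \hE_n$. This sum is therefore bounded above by $\sum_{\lambda_n > 0}\lambda_n \Tr~\hE_n = \Tr(\hA-\hB)_+$, so the right-hand side of (\ref{eq:AB_main}) is at most $\Tr~\hB + \Tr(\hA-\hB)_+ = \Tr~\hY^\opt$. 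Taking the supremum over $\hPhi$ in (\ref{eq:AB_main}) gives $\Tr~\hY \ge \Tr~\hY^\opt$ for every feasible $\hY$, establishing optimality of $\hY^\opt$.

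The delicate step is characterizing equality. The scalar bound above is saturated iff $\Tr(\hE_n\hPhi\hE_n) = \Tr~\hE_n$ whenever $\lambda_n > 0$ and $\Tr(\hE_n\hPhi\hE_n) = 0$ whenever $\lambda_n < 0$; since $0 \le \hE_n\hPhi\hE_n \le \hE_n$, the positive operators $\hE_n - \hE_n\hPhi\hE_n$ and $\hE_n\hPhi\hE_n$ are forced to vanish in the respective cases. Taking positive square roots of $\ident-\hPhi$ and $\hPhi$ promotes these to $\hPhi\hE_n = \hE_n$ and $\hPhi\hE_n = 0$, respectively. Using mutual orthogonality of the $\hE_n$ and $\sum_n \hE_n = \ident$ on $\mH$, these conditions then assemble into $\hPhi \ge \sum_{\lambda_n > 0}\hE_n = \Proja(\hA-\hB)$ and $\hPhi \le \ident - \sum_{\lambda_n < 0}\hE_n = \Projb(\hA-\hB)$, which is precisely (\ref{eq:AB_main_proj}). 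This last promotion from scalar trace equalities to the global projection inequalities is the main technical obstacle; the other steps are routine PSD manipulations.
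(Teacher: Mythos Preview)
Your argument is correct and complete. The paper takes a different, shorter route: it normalizes $\hA$ and $\hB$ so that $\Tr(\hA+\hB)=1$, thereby recasting problem~(\ref{eq:AB_dual}) as the dual of the minimum-error discrimination problem for the binary ensemble $\{\hrho_A,\hrho_B\}=\{c\hA,c\hB\}$, and then simply quotes Helstrom's classical result that $\hX^\opt=\hrho_B+(\hrho_A-\hrho_B)_+$ is optimal and that a POVM $\{\hPhi,\ident-\hPhi\}$ achieves the optimum iff (\ref{eq:AB_main_proj}) holds. Your proof, by contrast, is fully self-contained: you establish weak duality directly, maximize the primal functional via the spectral decomposition of $\hA-\hB$, and extract the equality condition by promoting the trace constraints $\Tr(\hE_n\hPhi\hE_n)\in\{0,\Tr\hE_n\}$ to the operator identities $\hPhi\hE_n\in\{0,\hE_n\}$ and then to the sandwich (\ref{eq:AB_main_proj}). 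The paper's approach is quicker but leans on an external citation; yours is longer but reproves the Helstrom two-state result from scratch, which makes the lemma independent of the discrimination literature. One small remark: your final ``assemble into'' step implicitly uses both directions of the equivalence $\hPhi\hP=\hP \Leftrightarrow \hPhi\ge\hP$ (for a projector $\hP$ and $0\le\hPhi\le\ident$); stating this explicitly would make the iff in (\ref{eq:AB_main_proj}) fully transparent.
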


\begin{proof}
The case of $\Tr(\hA + \hB) = 0$, i.e., $\hA = \hB = 0$, is obvious,
so we concentrate on $\Tr(\hA + \hB) \neq 0$.
Let $c = 1/\Tr(\hA + \hB)$, $\hrho_A = c\hA$, $\hrho_B = c\hB$, and $\hX = c\hY$;
then, problem (\ref{eq:AB_dual}) can be reformulated as
\begin{eqnarray}
  \begin{array}{ll}
   {\rm minimize} & \Tr~\hX \\
   {\rm subject~to} & \hX \ge \hrho_A, \hX \ge \hrho_B. \\
  \end{array} \label{eq:AB_dual_rho}
\end{eqnarray}
This is the dual problem of the problem of obtaining a minimum-error measurement
for a binary quantum state set $\{ \hrho_A, \hrho_B \}$.
Thus, the optimal solution is $\hX^\opt = \hrho_B + (\hrho_A - \hrho_B)_+$
(e.g., \cite{Hel-1969}).
Moreover, for any operator $\hPhi$ with $\ident \ge \hPhi \ge 0$,
$\{ \hPhi, \ident - \hPhi \}$ is a POVM for a binary quantum state set;
thus, it follows that
\begin{eqnarray}
  \Tr~\hX^\opt &\ge& \Tr(\hrho_A \hPhi) + \Tr[\hrho_B(\ident - \hPhi)]. \label{eq:AB_dual_TrX}
\end{eqnarray}
Dividing this equation by $c$ gives Eq.~(\ref{eq:AB_main}).
Obviously, the equality in (\ref{eq:AB_main}) holds
if and only if $\{ \hPhi, \ident - \hPhi \}$ is a minimum-error measurement,
i.e., (\ref{eq:AB_main_proj}) holds \cite{Hel-1969}.
\QED
\end{proof}

\subsection{Proposed upper bound}

According to Eq.~(\ref{eq:me_UB}), for any feasible solution to problem~\DPme, $\hX \in \mS_0$,
$\PCopt$ is upper bounded by $\Tr~\hX$.
Here, we consider obtaining a suboptimal solution to problem~\DPme
by using Lemma~\ref{lemma:AB_dual}.
For $m \in \mI_{M-1}$, the following optimization problem is considered:
\begin{eqnarray}
 \begin{array}{ll}
  {\rm minimize} & \Tr~\hX'_{m+1} \\
  {\rm subject~to} & \hX'_{m+1} \ge \hrho_{m+1}, \hX'_{m+1} \ge \hX_m \\
 \end{array} \label{eq:me_dual2}
\end{eqnarray}
with a positive semidefinite operator $\hX'_{m+1}$,
where $\hX_0 = \hrho_0$, and $\hX_{m+1}$ $~(m \in \mI_{M-1})$ is an optimal solution
to problem (\ref{eq:me_dual2}).
We derive a new upper bound on $\PCopt$, namely, $\PCUp = \Tr~\hX_{M-1}$.
According to Lemma~\ref{lemma:AB_dual}, the optimal solution to problem (\ref{eq:me_dual2})
is expressed as $\hX_{m+1} = \hX_m + (\hrho_{m+1} - \hX_m)_+$.
The proposed upper bound $\PCUp$ can thus be expressed as
\begin{eqnarray}
 \PCUp &=& \Tr~\hX_{M-1}, \nonumber \\
 \hX_0 &=& \hrho_0, \nonumber \\
 \hX_{m+1} &=& \hX_m + (\hrho_{m+1} - \hX_m)_+, ~~~ m \in \mI_{M-1}. \label{eq:me_UB_prop}
\end{eqnarray}

We can easily show that $\PCopt$ is upper bounded by $\PCUp$:
\begin{thm} \label{thm:UB}
 $\PCUp \ge \PCopt$.
\end{thm}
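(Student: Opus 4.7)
The plan is to invoke the weak duality relation already recorded in Eq.~(\ref{eq:me_UB}), which states that $\Tr~\hX \ge \PCopt$ for every $\hX \in \mS_0$, i.e., for every positive semidefinite $\hX$ with $\hX \ge \hrho_m$ for all $m \in \mI_M$. Therefore it suffices to show that the operator $\hX_{M-1}$ produced by the recursion in Eq.~(\ref{eq:me_UB_prop}) belongs to $\mS_0$.

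I would establish this by induction on $m$, proving the stronger invariant that $\hX_m \ge \hrho_k$ for every $k \in \{0,1,\ldots,m\}$ (which in particular implies $\hX_m \ge 0$). The base case $m=0$ is immediate since $\hX_0 = \hrho_0$. For the inductive step, assume $\hX_m \ge \hrho_k$ for all $k \le m$. The operator $\hY = \hX_{m+1} = \hX_m + (\hrho_{m+1} - \hX_m)_+$ is, by Lemma~\ref{lemma:AB_dual} applied with $\hA = \hrho_{m+1}$ and $\hB = \hX_m$ (both positive semidefinite by the inductive hypothesis), the optimal solution to problem~(\ref{eq:me_dual2}); hence it satisfies both $\hX_{m+1} \ge \hrho_{m+1}$ and $\hX_{m+1} \ge \hX_m$. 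The second inequality together with the inductive hypothesis yields $\hX_{m+1} \ge \hX_m \ge \hrho_k$ for every $k \le m$, and the first inequality handles $k = m+1$. This closes the induction.

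Taking $m = M-1$ gives $\hX_{M-1} \ge \hrho_k$ for every $k \in \mI_M$, so $\hX_{M-1} \in \mS_0$. Applying Eq.~(\ref{eq:me_UB}) then delivers $\PCUp = \Tr~\hX_{M-1} \ge \PCopt$, as required.

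There is no real obstacle here: the argument is essentially a verification that the greedy construction, each step of which is justified by Lemma~\ref{lemma:AB_dual}, preserves dominance over all previously processed $\hrho_k$. The only subtlety worth flagging is that the recursion only explicitly enforces $\hX_{m+1} \ge \hX_m$ and $\hX_{m+1} \ge \hrho_{m+1}$ at each step, so one must observe that dominance over $\hrho_0,\ldots,\hrho_m$ is automatically propagated through the monotonicity $\hX_{m+1} \ge \hX_m$; this is what the inductive invariant captures.
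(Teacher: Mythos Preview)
Your proof is correct and follows essentially the same approach as the paper: show that $\hX_{M-1} \in \mS_0$ via the chain $\hX_{M-1} \ge \hX_m \ge \hrho_m$ and then invoke Eq.~(\ref{eq:me_UB}). The paper simply asserts this chain as ``clear from the constraint of problem~(\ref{eq:me_dual2}),'' whereas you spell out the induction explicitly; the underlying argument is identical.
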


\begin{proof}
 From the constraint of problem (\ref{eq:me_dual2}),
 it is clear that $\hX_{M-1} \ge \hX_m \ge \hrho_m$ holds for any $m \in \mI_M$.
 Thus, $\hX_{M-1} \in \mS_0$ also holds,
 which gives $\PCUp \ge \PCopt$ from Eq.~(\ref{eq:me_UB}).
 \QED
\end{proof}

\begin{remark}
 For a set of binary states, $\PCUp = \PCopt$ holds.
\end{remark}

\begin{proof}
 Since $\hX_1 = \hrho_0 + (\hrho_1 - \hrho_0)_+$ is the optimal solution
 to problem~\DPme, $\PCUp = \Tr~\hX_1 = \PCopt$ holds.
 \QED
\end{proof}

In Ref.~\cite{Qiu-Li-2010}, Qiu {\it et al.} proposed an upper bound on
$\PCopt$, denoted as $\PCU{Qiu}$, expressed as
\begin{eqnarray}
 \PCU{Qiu} &=& \min_{k \in \mI_M} \PCU{Qiu}(k), \nonumber \\
 \PCU{Qiu}(k) &=& \xi_k + \sum_{\mI_M \ni m \neq k} \Tr(\hrho_m - \hrho_k)_+.
  \label{eq:me_UB_Qiu}
\end{eqnarray}
Note that $\PCU{Qiu}$ is identical to $1 - L_4$ in Ref.~\cite{Qiu-Li-2010}.
$\PCU{Qiu}(k)$ is equivalent to $\PCU{Qiu}(0)$ after permuting $\hrho_0$ and $\hrho_k$.
Here, we give a slightly modified version of $\PCUp$, denoted as $\PCUp'$,
and show $\PCUp' \le \PCU{Qiu}$.
$\PCUp'$ is defined as
\begin{eqnarray}
 \PCUp' = \min_{k \in \mI_M} \PCUp(k), \label{eq:PCUp'}
\end{eqnarray}
where $\PCUp(k)$ is $\PCUp$ obtained from Eq.~(\ref{eq:me_UB_prop})
after permuting $\hrho_0$ and $\hrho_k$.
Since $\PCUp(k) \ge \PCopt$ holds for any $k \in \mI_M$,
$\PCopt$ is obviously upper bounded by $\PCUp'$.
Moreover, from $\PCUp(0) = \PCUp$, $\PCUp' \le \PCUp$ holds.
The following proposition also holds:

\begin{proposition} \label{pro:Qiu}
 $\PCUp' \le \PCU{Qiu}$.
\end{proposition}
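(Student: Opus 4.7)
The plan is to prove the stronger, term-by-term inequality $\PCUp(k) \le \PCU{Qiu}(k)$ for every $k \in \mI_M$; taking the minimum over $k$ on both sides then yields $\PCUp' \le \PCU{Qiu}$ immediately. By the permutation symmetry recorded just after Eq.~(\ref{eq:PCUp'}), it suffices to treat the case $k = 0$, that is, to show $\PCUp \le \PCU{Qiu}(0)$.

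First, I would unfold the recursion in Eq.~(\ref{eq:me_UB_prop}) telescopically to write
\begin{eqnarray*}
 \PCUp &=& \Tr~\hX_{M-1} = \xi_0 + \sum_{m=1}^{M-1} \Tr(\hrho_m - \hX_{m-1})_+,
\end{eqnarray*}
and compare this with $\PCU{Qiu}(0) = \xi_0 + \sum_{m=1}^{M-1} \Tr(\hrho_m - \hrho_0)_+$. It then suffices to prove, for each $m \ge 1$, the pointwise bound $\Tr(\hrho_m - \hX_{m-1})_+ \le \Tr(\hrho_m - \hrho_0)_+$.

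Second, I would observe by induction on $m$ that $\hX_m \ge \hrho_0$ for every $m \in \mI_M$: the base case $\hX_0 = \hrho_0$ is immediate, and the inductive step follows because $(\hrho_{m+1} - \hX_m)_+ \ge 0$ forces $\hX_{m+1} \ge \hX_m$. Consequently $\hrho_m - \hX_{m-1} \le \hrho_m - \hrho_0$ in the Loewner order. The desired inequality then drops out of the monotonicity property $\hA \le \hB \Rightarrow \Tr~\hA_+ \le \Tr~\hB_+$, which is a direct consequence of the variational identity $\Tr~\hA_+ = \max\{ \Tr(\hA \hPhi) : \ident \ge \hPhi \ge 0 \}$; the latter can in turn be read off from Lemma~\ref{lemma:AB_dual} by setting $\hB = 0$ there. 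Summing the term-by-term bounds over $m$ closes the argument.

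The only subtle point is recognizing what is actually needed in the induction: not the much stronger (and generally false) statement that $\hX_m$ dominates every $\hrho_j$, but merely that it dominates the initial iterate $\hX_0 = \hrho_0$, which is automatic from monotonicity of the recursion. Once this observation is in place, the estimate reduces to a routine comparison of positive parts, so I would not expect further obstacles.
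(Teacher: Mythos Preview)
Your argument is correct and follows essentially the same route as the paper: reduce to $k=0$ by permutation, telescope the recursion, use $\hX_{m-1} \ge \hrho_0$, and conclude via monotonicity of $\Tr(\cdot)_+$ under the Loewner order. One small quibble: Lemma~\ref{lemma:AB_dual} is stated only for positive semidefinite $\hA,\hB$, so setting $\hB=0$ there does not literally yield the variational identity for a general Hermitian argument; the paper instead invokes Lemma~\ref{lemma:AB} in Appendix~\ref{append:Hermite} for exactly this monotonicity step, though the variational identity you use is of course elementary in its own right.
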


\begin{proof}
It suffices to show $\PCUp(k) \le \PCU{Qiu}(k)$ for any $k \in \mI_M$.
Since $\PCUp(k) \le \PCU{Qiu}(k)$ is equivalent to $\PCUp(0) \le \PCU{Qiu}(0)$
for the quantum state set that is obtained by permutation of $\hrho_0$ and $\hrho_k$,
it is only necessary to show $\PCUp(0) \le \PCU{Qiu}(0)$ for any quantum state set.
Since $\hX_m \ge \hrho_0$ gives $\hrho_{m+1} - \hrho_0 \ge \hrho_{m+1} - \hX_m$
for any $m \in \mI_{M-1}$,
from Lemma~\ref{lemma:AB} in Appendix~\ref{append:Hermite},
\begin{eqnarray}
 \Tr(\hrho_{m+1} - \hrho_0)_+ &\ge& \Tr(\hrho_{m+1} - \hX_m)_+
\end{eqnarray}
is obtained.
Therefore, Eqs.~(\ref{eq:me_UB_prop}) and (\ref{eq:me_UB_Qiu}) give
\begin{eqnarray}
 \PCUp(0) &=& \Tr~\hrho_0 + \sum_{m=0}^{M-2} \Tr~(\hrho_{m+1} - \hX_m)_+ \nonumber \\
 &\le& \xi_0 + \sum_{m=0}^{M-2} \Tr~(\hrho_{m+1} - \hrho_0)_+ = \PCU{Qiu}(0).
\end{eqnarray}
\QED
\end{proof}

\subsection{Attainability of proposed upper bound} \label{subsec:me_nas}

A necessary and sufficient condition for the proposed upper bound
to achieve the optimal success probability is provided by the following theorem:

\begin{thm} \label{thm:nas}
$\PCUp = \PCopt$ holds if and only if
$\{ \hE_k \}_{k=1}^{M-1}$ exists such that
\begin{eqnarray}
  \Projb[\hA_k(\hX_{k-1} - \hrho_k)\hA_k^\dag] &\ge& \hE_k \nonumber \\
  &\ge& \Proja[\hA_k(\hX_{k-1} - \hrho_k)\hA_k^\dag], \nonumber \\
  \lefteqn{ ~~~ k \in \{ 1, \cdots, M-1 \}, } \label{eq:nas_Pi_cond}
\end{eqnarray}
and
\begin{eqnarray}
  \hA_m (\hX_{m-1} - \hrho_m)_+ \hA_m^\dag &=& [ \hA_m (\hX_{m-1} - \hrho_m) \hA_m^\dag ]_+,
   \nonumber \\
  \lefteqn{ m \in \{ 1, 2, \cdots, M-2 \}, } \label{eq:nas_commute}
\end{eqnarray}
where
\begin{eqnarray}
  \hA_m &=&
   \left\{
       \begin{array}{ll}
       \hE_{m+1}^\half \hE_{m+2}^\half \cdots \hE_{M-1}^\half, & 0 \le m < M - 1, \\
       \ident, & m = M-1. \label{eq:nas_Am}
       \end{array} \right.
\end{eqnarray}
\end{thm}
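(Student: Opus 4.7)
My plan is to prove both directions by leveraging Lemma~\ref{lemma:AB_dual} iteratively along the chain $\hX_0, \hX_1, \dots, \hX_{M-1}$ defined in (\ref{eq:me_UB_prop}).

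For sufficiency, given $\{\hE_k\}_{k=1}^{M-1}$ satisfying (\ref{eq:nas_Pi_cond}) and (\ref{eq:nas_commute}), I would construct a POVM by
\[
 \hPi_0 = \hA_0^\dag \hA_0,\quad \hPi_k = \hA_k^\dag(\ident - \hE_k)\hA_k\ \ (1 \le k \le M-1),\quad \hPi_M = 0.
\]
Using $\hA_{k-1} = \hE_k^\half \hA_k$ for $k \ge 1$, and hence $\hA_{k-1}^\dag \hA_{k-1} = \hA_k^\dag \hE_k \hA_k$, the sum $\sum_{k=0}^{M-1}\hPi_k$ telescopes to $\hA_{M-1}^\dag \hA_{M-1} = \ident$. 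I would then prove by downward induction on $k$ the identity
\[
 \Tr(\hA_k \hX_k \hA_k^\dag) = \Tr(\hX_{k-1}\hA_{k-1}^\dag \hA_{k-1}) + \Tr(\hrho_k \hPi_k),
\]
obtained by substituting $\hX_k = \hrho_k + (\hX_{k-1}-\hrho_k)_+$, conjugating by $\hA_k$, invoking (\ref{eq:nas_commute}) to rewrite $\hA_k(\hX_{k-1}-\hrho_k)_+\hA_k^\dag = [\hA_k(\hX_{k-1}-\hrho_k)\hA_k^\dag]_+$, and then applying Lemma~\ref{lemma:AB_dual} to the pair $(\hA_k \hX_{k-1}\hA_k^\dag,\ \hA_k \hrho_k \hA_k^\dag)$ with its equality case enabled by condition (\ref{eq:nas_Pi_cond}). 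Unrolling from $k=M-1$ (where $\hA_{M-1}=\ident$) down to $k=1$, together with $\hX_0 = \hrho_0$, yields $\Tr~\hX_{M-1} = \PC(\Pi)$. Since $\PC(\Pi) \le \PCopt \le \PCUp = \Tr~\hX_{M-1}$ by Theorem~\ref{thm:UB}, equality is forced throughout, so $\PCUp = \PCopt$.

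For necessity, suppose $\PCUp = \PCopt$ and let $\Pi^\opt$ be an optimal minimum-error measurement. Because $\hX_{M-1} \in \mS_0$ and $\sum_k \hPi^\opt_k = \ident$, the equation $\PC(\Pi^\opt) = \Tr~\hX_{M-1}$ forces $\Tr[\hPi^\opt_k(\hX_{M-1}-\hrho_k)] = 0$ and hence $\hPi^\opt_k(\hX_{M-1}-\hrho_k) = 0$ for each $k$ by positivity. I would reverse the sufficiency construction, setting $\hE_{M-1} = \ident - \hPi^\opt_{M-1}$ and recursively extracting each $\hE_k$ as the operator satisfying $\hPi^\opt_k = \hA_k^\dag(\ident-\hE_k)\hA_k$ on $\supp(\hA_k^\dag \hA_k)$, extended to the orthogonal complement so that $0 \le \hE_k \le \ident$. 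Condition (\ref{eq:nas_Pi_cond}) would then follow because complementary slackness confines $\supp(\hPi^\opt_k)$ to a subspace that forces $\hE_k$ into the required projection sandwich, and (\ref{eq:nas_commute}) would emerge by demanding that the recursive identity used in the sufficiency argument stays consistent at every level with these $\hE_k$'s.

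The main obstacle will be the necessity direction, specifically (i) handling the non-invertibility of $\hA_k$ in the recovery of $\hE_k$---where the freedom outside $\supp(\hA_k^\dag \hA_k)$ must be exercised consistently with both conditions---and (ii) deriving the commutation condition (\ref{eq:nas_commute}) from primal optimality alone. The latter is the subtlest point, since it is not a mere support-containment statement but an algebraic compatibility between the spectral decomposition of $\hX_{k-1}-\hrho_k$ and conjugation by $\hA_k$. I anticipate this requiring careful propagation of the kernel structure of $\hX_{M-1}-\hrho_k$ through the intermediate differences $\hX_j - \hrho_j$ along the chain.
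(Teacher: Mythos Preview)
Your sufficiency argument is essentially the paper's: the POVM you write down coincides with the paper's $\hPi_m = |\hA_m|^2 - |\hA_{m-1}|^2$, and your telescoping identity is exactly Eq.~(\ref{eq:nas_n_AXA}) in the equality case. The paper packages the two ingredients you use separately (condition (\ref{eq:nas_commute}) followed by the equality case of Lemma~\ref{lemma:AB_dual}) into a single Lemma~\ref{lemma:AHA_ge}, but the content is the same.

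The necessity direction, however, has a real gap, and the paper's route avoids precisely the obstacle you flag. Your plan is to start from complementary slackness $\hPi^\opt_k(\hX_{M-1}-\hrho_k)=0$ and then ``propagate kernel structure'' down the chain to reach statements about $\hX_{k-1}-\hrho_k$; you correctly note that (\ref{eq:nas_commute}) is not a support condition but an algebraic compatibility, and you do not actually give a mechanism for extracting it. The paper sidesteps this entirely. The key observation you are missing is that your telescoping step can be written as an \emph{inequality} valid for \emph{every} family $\{\hE_k\}$ with $0\le\hE_k\le\ident$ (equivalently, for every POVM), not just for those satisfying (\ref{eq:nas_commute}). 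Concretely, since $(\hX_{k-1}-\hrho_k)_+\ge \hX_{k-1}-\hrho_k$, one has $\hA_k(\hX_{k-1}-\hrho_k)_+\hA_k^\dag\ge \hA_k(\hX_{k-1}-\hrho_k)\hA_k^\dag$, and Lemma~\ref{lemma:AB} gives
\[
 \Tr\bigl[\hA_k(\hX_{k-1}-\hrho_k)_+\hA_k^\dag\bigr]\ \ge\ \Tr\bigl[\hA_k(\hX_{k-1}-\hrho_k)\hA_k^\dag\bigr]_+,
\]
with equality \emph{iff} (\ref{eq:nas_commute}) holds at level $k$. Chaining this with Lemma~\ref{lemma:AB_dual} (the step you already use) yields $\Tr(\hA_k\hX_k\hA_k^\dag)\ge \Tr(\hrho_k\hPi_k)+\Tr(\hA_{k-1}\hX_{k-1}\hA_{k-1}^\dag)$ for arbitrary $\hE_k$, i.e.\ $\PCUp\ge\PC(\Pi)$ for every POVM. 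When $\PCUp=\PCopt=\PC(\Pi^\opt)$, equality is forced at every level, and the two equality clauses (of Lemma~\ref{lemma:AB} and Lemma~\ref{lemma:AB_dual}) hand you (\ref{eq:nas_commute}) and (\ref{eq:nas_Pi_cond}) directly for the $\{\hE_k\}$ extracted from $\Pi^\opt$ via Appendix~\ref{append:Pi_E}. No kernel propagation from $\hX_{M-1}$ is needed; the intermediate $\hX_{k-1}$'s enter automatically because the inequality chain is level-by-level.
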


\begin{proof}
In preparation for the proof, a set of operators, $\Pi = \{ \hPi_m \}_{m \in \mI_M}$, is defined as
\begin{eqnarray}
  \hPi_m &=&
   \left\{
       \begin{array}{ll}
       |\hA_m|^2 - |\hA_{m-1}|^2, & 1 \le m \le M - 1, \\
       |\hA_0|^2, & m = 0,
       \end{array} \right. \label{eq:Pi_A}
\end{eqnarray}
where $|\hA| = (\hA^\dag\hA)^{1/2}$.
For any $\{ \hE_k \}_{k=1}^{M-1}$ with $\ident \ge \hE_k \ge 0$,
\begin{eqnarray}
  \sum_{m=0}^{M-1} \hPi_m &=& |\hA_{M-1}|^2 = \ident, \nonumber \\
  \hPi_m &=& \hA_m^\dag(\ident - \hE_m)\hA_m \ge 0, ~~~ 1 \le m \le M-1, \nonumber \\
  \hPi_0 &=& |\hA_0|^2 \ge 0 \label{eq:nas_sum_Pi}
\end{eqnarray}
holds.
The second line of Eq.~(\ref{eq:nas_sum_Pi}) follows from $|\hA_{m-1}|^2 = \hA_m^\dag\hE_m\hA_m$,
which is given by Eq.~(\ref{eq:nas_Am}).
Thus, $\Pi$ is a POVM.
On the contrary, for any POVM $\Pi = \{ \hPi_m \}$,
$\{ \hE_k \}_{k=1}^{M-1}$ exists such that
$\ident \ge \hE_k \ge 0$ and Eq.~(\ref{eq:Pi_A}) hold (see Appendix~\ref{append:Pi_E}).

In the following, $\{ \hE_k \}$ satisfying $\ident \ge \hE_k \ge 0$ $~(1 \le k \le M-1)$
and its corresponding POVM $\Pi$, defined by Eq.~(\ref{eq:Pi_A}), are considered.
From Lemma~\ref{lemma:AHA_ge} in Appendix~\ref{append:Hermite} and
$\hX_m = \hX_{m-1} + (\hrho_m - \hX_{m-1})_+$,
it follows that for any $m$ with $1 \le m \le M-1$,
\begin{eqnarray}
  \lefteqn{ \Tr(\hA_m\hX_m\hA_m^\dag) } \nonumber \\
  &\ge& \Tr[\hA_m\hrho_m\hA_m^\dag(\ident - \hE_m)]
   + \Tr(\hA_m\hX_{m-1}\hA_m^\dag \hE_m) \nonumber \\
  &=& \Tr(\hrho_m\hPi_m) + \Tr(\hA_{m-1}\hX_{m-1}\hA_{m-1}^\dag), \label{eq:nas_n_AXA}
\end{eqnarray}
where the last line follows from
$\hA_m^\dag(\ident - \hE_m)\hA_m = \hPi_m$ and
$\hA_m^\dag\hE_m\hA_m = |\hA_{m-1}|^2$.
Using Eq.~(\ref{eq:nas_n_AXA}) recursively for $m = M-1, M-2, \cdots, 1$ yields
\begin{eqnarray}
  \PCUp &=& \Tr~\hX_{M-1} \nonumber \\
  &\ge& \sum_{m=1}^{M-1} \Tr(\hrho_m\hPi_m) + \Tr(\hrho_0|\hA_0|^2) \nonumber \\
  &=& \sum_{m=0}^{M-1} \Tr(\hrho_m\hPi_m) = \PC(\Pi). \label{eq:nas_n_PCUp}
\end{eqnarray}

First, we prove the sufficiency of Theorem~\ref{thm:nas}.
Assume $\PCUp = \PCopt$.
$\Pi = \{ \hPi_m \}$ is taken as a minimum-error measurement.
$\hE_m$ is chosen to satisfy $\ident \ge \hE_m \ge 0$ and Eqs.~(\ref{eq:nas_Am}) and (\ref{eq:Pi_A}).
Then, from $\PCUp = \PCopt = \PC(\Pi)$, the equality in Eq.~(\ref{eq:nas_n_PCUp}) holds,
implying that the equality in Eq.~(\ref{eq:nas_n_AXA}) holds for any
$m \in \{ M-1,M-2,\cdots,1 \}$.
Therefore, according to Lemma~\ref{lemma:AHA_ge},
Eqs.~(\ref{eq:nas_Pi_cond}) and (\ref{eq:nas_commute}) hold.

Next, we prove the necessity of Theorem~\ref{thm:nas}.
Assume that $\{ \hE_k \}_{k=1}^{M-1}$ exists such that
Eqs.~(\ref{eq:nas_Pi_cond}) and (\ref{eq:nas_commute}) hold.
Also, let $\Pi = \{ \hPi_m \}$ be the POVM defined by Eq.~(\ref{eq:Pi_A}).
According to Lemma~\ref{lemma:AHA_ge},
the equality in Eq.~(\ref{eq:nas_n_AXA}) holds for any $m \in \{ M-1,M-2,\cdots,1 \}$;
thus, the equality in Eq.~(\ref{eq:nas_n_PCUp}), i.e. $\PCUp = \PC(\Pi)$, holds.
From $\PCUp \ge \PCopt \ge \PC(\Pi)$, $\PCUp = \PCopt$ therefore also holds.
\QED
\end{proof}

$\he_m$ and $\ha_m$ are defined as
\begin{eqnarray}
 \he_m &=& \Projb[\ha_m(\hX_{m-1} - \hrho_m)\ha_m^\dag], \nonumber \\
 \ha_m &=&
  \left\{
   \begin{array}{ll}
	\he_{m+1} \he_{m+2} \cdots \he_{M-1}, & 0 \le m < M - 1, \\
	\ident, & m = M-1. \label{eq:nas_am}
   \end{array} \right.
\end{eqnarray}
Note that if $\hE_m = \he_m$, then $\hA_m = \ha_m$.
The following corollary (proof in Appendix~\ref{append:cor_nas_supp}) holds:

\begin{cor} \label{cor:nas_supp}
 Assume that, for any $m$ with $1 \le m \le M-1$,
 \begin{eqnarray}
  \supp[\ha_m(\hX_{m-1} - \hrho_m)\ha_m^\dag] &=& \supp~\ha_m\hX_m\ha_m^\dag. \label{eq:suppX}
 \end{eqnarray}
 Then, $\PCUp = \PCopt$ holds if and only if
 \begin{eqnarray}
  \ha_m (\hX_{m-1} - \hrho_m)_+ \ha_m^\dag &=& [ \ha_m (\hX_{m-1} - \hrho_m) \ha_m^\dag ]_+,
   \nonumber \\
  \lefteqn{ m \in \{ 1, 2, \cdots, M-2 \}. } \label{eq:nas_commute2}
 \end{eqnarray}
\end{cor}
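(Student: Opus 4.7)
The plan is to derive Corollary~\ref{cor:nas_supp} from Theorem~\ref{thm:nas} by specializing the abstract operators $\{\hE_k\}$ of that theorem to the canonical projections $\{\he_k\}$ defined in~(\ref{eq:nas_am}). A structural observation used throughout is that, because each $\he_k$ is a projection, $\he_k^{1/2}=\he_k$, so the operator $\hA_m$ built by (\ref{eq:nas_Am}) with $\hE_k:=\he_k$ collapses to $\ha_m$.

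For the \emph{if} direction I would assume (\ref{eq:nas_commute2}) and set $\hE_k=\he_k$. By the very definition $\he_k=\Projb[\ha_k(\hX_{k-1}-\hrho_k)\ha_k^\dag]$, which saturates the upper side of (\ref{eq:nas_Pi_cond}); the lower side is automatic since $\Projb\ge\Proja$. Then (\ref{eq:nas_commute}) coincides with (\ref{eq:nas_commute2}), so Theorem~\ref{thm:nas} yields $\PCUp=\PCopt$.

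For the \emph{only if} direction I would assume $\PCUp=\PCopt$. Theorem~\ref{thm:nas} produces \emph{some} family $\{\hE_k\}$ satisfying (\ref{eq:nas_Pi_cond}) and (\ref{eq:nas_commute}); the task is to show that the canonical choice $\hE_k=\he_k$ also satisfies (\ref{eq:nas_commute}), which is exactly (\ref{eq:nas_commute2}). The freedom in $\hE_k$ left open by (\ref{eq:nas_Pi_cond}) lives entirely on $\Ker[\hA_k(\hX_{k-1}-\hrho_k)\hA_k^\dag]$. The support hypothesis (\ref{eq:suppX}) identifies this kernel (once $\hA_k=\ha_k$) with $\Ker[\ha_k\hX_k\ha_k^\dag]$; since every $\ha_m$ with $m<k$ factors through $\ha_k\hX_k\ha_k^\dag$, any modification of $\hE_k$ on that kernel leaves $\ha_m(\hX_{m-1}-\hrho_m)\ha_m^\dag$ unchanged for $m<k$. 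A downward induction from $k=M-1$ then lets one replace $\hE_k$ by $\he_k$ one level at a time while preserving both (\ref{eq:nas_Pi_cond}) and (\ref{eq:nas_commute}), delivering (\ref{eq:nas_commute2}).

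The main obstacle is making this replacement argument rigorous. Because the definitions of $\ha_m$ and $\he_m$ in (\ref{eq:nas_am}) are interleaved, at each inductive step I must verify that (a) the canonical $\he_k$ still lies in the range permitted by (\ref{eq:nas_Pi_cond}) for the updated $\hA_k$, and (b) the commutation equality (\ref{eq:nas_commute}) at lower indices is preserved after the substitution. The support hypothesis (\ref{eq:suppX}) is precisely what is needed to absorb this residual freedom, and the algebraic work should reduce to the same lemmas of Appendix~\ref{append:Hermite} that underpin the proof of Theorem~\ref{thm:nas}.
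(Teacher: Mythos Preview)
Your proposal is correct and follows essentially the same route as the paper. The ``if'' direction is identical: set $\hE_k=\he_k$, observe $\hA_k=\ha_k$, and invoke Theorem~\ref{thm:nas}. For the ``only if'' direction both arguments use downward induction from $m=M-1$ and rely on the support hypothesis~(\ref{eq:suppX}) to show that the residual freedom of $\hE_k$ (living on $\Ker[\ha_k(\hX_{k-1}-\hrho_k)\ha_k^\dag]=\Ker[\ha_k\hX_k\ha_k^\dag]$) is orthogonal to the operators that matter at lower indices. The only cosmetic difference is packaging: you phrase the step as ``replace $\hE_k$ by $\he_k$ while preserving (\ref{eq:nas_Pi_cond}) and (\ref{eq:nas_commute}),'' whereas the paper keeps the original family $\{\hE_k\}$ fixed and instead proves directly the invariance
\[
\hA_m\hx\hA_m^\dag=\ha_m\hx\ha_m^\dag\qquad\text{for all }\hx\ge 0\text{ with }\supp\hx\subseteq\supp\hX_m,
\]
via the intermediate identity $\hE_m^{1/2}\hy\hE_m^{1/2}=\he_m\hy\he_m$ for $\supp\hy\subseteq\supp(\ha_m\hX_m\ha_m^\dag)$ (using Lemma~\ref{lemma:supp} to propagate the support containment). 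Your ``factors through $\ha_k\hX_k\ha_k^\dag$'' is exactly this lemma in disguise, so the two arguments unwind to the same computation.
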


%Corollary~\ref{cor:nas_supp} may be useful for some quantum state sets;
%in particular, Eq.~(\ref{eq:suppX}) holds if
%the supports of the operators $\{ \hrho_m \}$ are linearly independent.
%Indeed, in this case, since $\supp~\hX_k = \supp(\hX_{k-1} - \hrho_k)$ holds,
%from Lemma~\ref{lemma:supp} in Appendix~\ref{append:Hermite}, Eq.~(\ref{eq:suppX}) holds.

\subsection{Proposed lower bound} \label{subsec:me_lower}

The proof of Theorem~\ref{thm:nas} shows that
if $\PCUp = \PCopt$, then the POVM $\{ \hPi_m \}_{m \in \mI_M}$ of Eq.~(\ref{eq:Pi_A}),
which is obtained from the corresponding $\{ \hE_k \}_{k=1}^{M-1}$,
is a minimum-error measurement.
In particular, substituting $\hE_k = \he_k$ gives that
the POVM $\Pi^\circ = \{ \hPi^\circ_m \}_{m \in \mI_M}$ defined as
\begin{eqnarray}
 \hPi^\circ_m &=&
  \left\{
   \begin{array}{ll}
	|\ha_m|^2 - |\ha_{m-1}|^2, & 0 < m \le M - 1, \\
	|\ha_0|^2, & m = 0,
   \end{array} \right.
\end{eqnarray}
where $\he_m$ and $\ha_m$ are given by Eq.~(\ref{eq:nas_am}).
is also a minimum-error measurement when $\PCUp = \PCopt$.
Exploiting this fact, we propose a lower bound on $\PCopt$, denoted as $\PCLp$,
expressed as
\begin{eqnarray}
\PCLp &=& \PC(\Pi^\circ) = \sum_{m=0}^{M-1} \Tr(\hrho_m\hPi^\circ_m). \label{eq:PCLp}
\end{eqnarray}
Since $\Pi^\circ$ is a POVM, $\PCLp \le \PCopt$ obviously holds.
The SRM $\Pi^\SRM = \{ \hPi^\SRM_m \}_{m \in \mI_M}$, which is defined as
\begin{eqnarray}
 \hPi^\SRM_m &=& \hG^{-\frac{1}{2}} \hrho_m \hG^{-\frac{1}{2}},
\end{eqnarray}
is well known as a good approximation to a minimum-error measurement.
We will show in numerical experiments in Section~\ref{sec:examples} that 
$\PCLp$ tends to be closer to $\PCopt$ than the success probability of the SRM.

\section{Bounds on success probability of optimal inconclusive measurement}

\subsection{Proposed upper bound}

The arguments presented in the previous section can be extended to optimal inconclusive measurements
as follows.
Assume that a suboptimal solution, $\hX^\sopt$, to problem~\DPme for a quantum state set $\rho$
is given.
In this paper, let $\hX^\sopt = \hX_{M-1}$, which is defined by Eq.~(\ref{eq:me_UB_prop}).
Note that if an optimal solution $\hX^\opt$ to problem~\DPme is given,
then $\hX^\sopt = \hX^\opt$ can be used instead of $\hX^\sopt = \hX_{M-1}$.
A suboptimal solution to problem~DP can be obtained by
solving the following optimization problem:
\begin{eqnarray}
 \begin{array}{ll}
  {\rm minimize} & \Tr~\hZ - a p \\
  {\rm subject~to} & \hZ \in \mZ_a \\
 \end{array} \label{eq:inc_dual2t}
\end{eqnarray}
with a positive semidefinite operator $\hZ$ on $\mH$ and $a \in \Real_+$,
where
\begin{eqnarray}
 \mZ_a &=& \{ \hZ : \hZ \ge a \hG, \hZ \ge \hX^\sopt \}.
\end{eqnarray}
Indeed, from $\hX^\sopt \in \mS_0$, $\hZ \in \mS_a$ holds for any $\hZ \in \mZ_a$;
i.e., $\hZ$ is a feasible solution to problem~DP.
Accordingly, $\PCopt_p$ is upper bounded by the optimal value of problem (\ref{eq:inc_dual2t}).
Let
\begin{eqnarray}
 s(a) &=& \min_{\hZ \in \mZ_a} \Tr~\hZ - a p; \label{eq:sa0}
\end{eqnarray}
then, the optimal value of problem (\ref{eq:inc_dual2t}) is equal to
$\min_{a \in \Real_+} s(a)$.
Lemma~\ref{lemma:AB_dual} indicates that
$\Tr~\hZ \ge \Tr~\hX^\sopt + \Tr(a \hG - \hX^\sopt)_+$ holds for any $\hZ \in \mZ_a$
and the equality holds when $\hZ = \hX^\sopt + (a \hG - \hX^\sopt)_+$.
Thus, we have
\begin{eqnarray}
 s(a) &=& \Tr~\hX^\sopt + \Tr(a \hG - \hX^\sopt)_+ - a p. \label{eq:sa}
\end{eqnarray}

Since it is difficult to obtain the optimal value, $\min_{a \in \Real_+} s(a)$,
of problem (\ref{eq:inc_dual2t}) in general,
we consider computing the minimum $s(a)$ for several values of $a$
as a suboptimal solution.
We propose an upper bound on $\PCopt_p$, denoted as $\PCUip$, expressed as
\begin{eqnarray}
 \PCUip &=& \min \left\{ 1 - p, \min_{a \in \mA} s(a) \right\}, \label{eq:PCUip}
\end{eqnarray}
where $\mA \subseteq \Real_+$ is a set of candidates for $a$.
Note that, from Eq.~(\ref{eq:PC_PE_PI}), $\PCopt_p \le 1 - p$ always holds, and
Eq.~(\ref{eq:PCUip}) guarantees that $\PCUip$ does not exceed $1 - p$.
It is expected that $\PCUip$ can be effectively obtained
by adaptively selecting appropriate candidates.

\begin{thm} \label{thm:UB_inc}
 $\PCUip \ge \PCopt_p$.
\end{thm}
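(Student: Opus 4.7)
The plan is to verify that each of the two quantities inside the outer minimum in Eq.~(\ref{eq:PCUip}) is itself an upper bound on $\PCopt_p$, whence their minimum is also an upper bound. The bound $1-p$ is immediate from Eq.~(\ref{eq:PC_PE_PI}): for any $\Pi \in \POVM_p$ we have $\PC(\Pi) = 1 - \PE(\Pi) - \PI(\Pi) \le 1 - p$, so $\PCopt_p \le 1-p$. The real content is to show that $s(a) \ge \PCopt_p$ for every $a \in \Real_+$, which in particular yields $\min_{a \in \mA} s(a) \ge \PCopt_p$.

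To obtain $s(a) \ge \PCopt_p$, I would first exhibit a feasible solution to problem~DP that realizes the value $s(a)$. Concretely, set $\hZ = \hX^\sopt + (a\hG - \hX^\sopt)_+$. By Lemma~\ref{lemma:AB_dual} applied with $\hA = a\hG$ and $\hB = \hX^\sopt$, this $\hZ$ is the optimal solution to the auxiliary problem defining $\mZ_a$, so it satisfies $\hZ \ge a\hG$ and $\hZ \ge \hX^\sopt$ with
\begin{eqnarray}
\Tr~\hZ &=& \Tr~\hX^\sopt + \Tr(a\hG - \hX^\sopt)_+.
\end{eqnarray}
Next, I would promote $\hZ \in \mZ_a$ to $\hZ \in \mS_a$: since $\hX^\sopt$ is a feasible solution to problem~\DPme, it satisfies $\hX^\sopt \ge \hrho_m$ for every $m \in \mI_M$, and therefore $\hZ \ge \hX^\sopt \ge \hrho_m$ together with $\hZ \ge a\hG$ give $\hZ \in \mS_a$ as defined in Eq.~(\ref{eq:inc_UB}).

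With $\hZ \in \mS_a$ in hand, the weak-duality-style inequality~(\ref{eq:TrZap}) yields
\begin{eqnarray}
\Tr~\hZ - ap &\ge& \PCopt_p,
\end{eqnarray}
and the left-hand side equals exactly $s(a)$ by Eq.~(\ref{eq:sa}). Taking the minimum over $a \in \mA$ gives $\min_{a \in \mA} s(a) \ge \PCopt_p$, and combining with $1-p \ge \PCopt_p$ finishes the proof.

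There is no real obstacle here; the argument is essentially bookkeeping on top of Lemma~\ref{lemma:AB_dual} and the known weak-duality inequality~(\ref{eq:TrZap}). The only point worth being careful about is confirming that the candidate $\hZ = \hX^\sopt + (a\hG - \hX^\sopt)_+$ actually lies in $\mS_a$ (not merely in $\mZ_a$), which is where the assumption $\hX^\sopt \in \mS_0$ on the chosen suboptimal solution to problem~\DPme is used.
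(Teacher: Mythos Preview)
Your proposal is correct and follows essentially the same approach as the paper: both arguments handle the $1-p$ term via Eq.~(\ref{eq:PC_PE_PI}) and then show $s(a)\ge \PCopt_p$ by using $\hX^\sopt\in\mS_0$ to conclude that the minimizer of $\mZ_a$ lies in $\mS_a$, after which Eq.~(\ref{eq:TrZap}) applies. The only cosmetic difference is that the paper phrases this as the set inclusion $\mZ_a\subseteq\mS_a$ (so $\min_{\mZ_a}\ge\min_{\mS_a}$) rather than exhibiting the explicit $\hZ=\hX^\sopt+(a\hG-\hX^\sopt)_+$ as you do, but the content is identical.
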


\begin{proof}
Since the case of $\PCUip = 1 - p$ is obvious, we assume $\PCUip < 1 - p$.
Recall that $\hZ \in \mS_a$ holds for any $\hZ \in \mZ_a$.
Thus, Eqs.~(\ref{eq:TrZap}) and (\ref{eq:sa0}) give
\begin{eqnarray}
 s(a) &=& \min_{\hZ \in \mZ_a} \Tr~\hZ - a p \ge \min_{\hZ \in \mS_a} \Tr~\hZ - a p \ge \PCopt_p.
\end{eqnarray}
Therefore, from Eq.~(\ref{eq:PCUip}), we have
\begin{eqnarray}
  \PCUip &=& \min_{a \in \mA} s(a) \ge \PCopt_p.
\end{eqnarray}
\QED
\end{proof}

Algorithm~1 shows an example of computing $\PCUip$.
We will provide a concrete algorithm on how to initialize and update $a$
in Subsection \ref{subsec:inc_algorithm}.
\begin{figure}
 \begin{algorithm}[H]
  \caption{An example of computing $\PCUip$.}
  \begin{algorithmic}[1]
   \REQUIRE $\{ \hrho_m \}_{m \in \mI_M}$, $p$
   \STATE Let $\hX^\sopt = \hX_{M-1}$, where $\hX_{M-1}$ is given by Eq.~(\ref{eq:me_UB_prop})
   \STATE $\PCUip \leftarrow 1 - p$
   \STATE Initialize $a$
   \FOR{$j \leftarrow 1,2,\cdots$}
   \STATE Compute $s(a)$ from Eq.~(\ref{eq:sa})
   \STATE $\PCUip \leftarrow \min \{ \PCUip, s(a) \}$
   \STATE Update $a$
   \ENDFOR
   \ENSURE $\PCUip$
  \end{algorithmic}
 \end{algorithm}
\end{figure}

\subsection{Properties of $s(a)$} \label{subsec:inc_sa}

To appropriately update $a$ in Algorithm~1,
the properties of $s(a)$ should be well understood.
The following proposition shows some of the properties (proof in Appendix~\ref{append:g}):
\begin{proposition} \label{pro:g}
Let $\lambda_{\max}(\hA)$ and $\lambda_{\min}(\hA)$ be
the maximum and minimum eigenvalues of a positive semidefinite operator $\hA$, respectively.
$s(a)$ satisfies the following conditions:
\begin{enumerate}[(1)]
 \setlength{\parskip}{0cm}
 \setlength{\itemsep}{0cm}
 \item If $a \le \lambda_{\min}(\hG^{-1/2} \hX^\sopt \hG^{-1/2})$, then $s(a) = \Tr~\hX^\sopt - a p$ holds.
	   Also, $1/M \le \lambda_{\min}(\hG^{-1/2} \hX^\sopt \hG^{-1/2})$ holds.
 \item If $a \ge \lambda_{\max}(\hG^{-1/2} \hX^\sopt \hG^{-1/2})$, then $s(a) = a (1 - p)$ holds.
 \item $s(a)$ is convex with respect to $a$.
\end{enumerate}
Note that since $\hG$ is a positive definite operator on $\mH$, $\hG^{-1/2}$ exists.
\end{proposition}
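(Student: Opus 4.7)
The plan is to reduce all three parts to spectral properties of the positive operator $\hT = \hG^{-1/2}\hX^\sopt\hG^{-1/2}$, which is well defined because $\hG$ is positive definite on $\mH$. My central tool will be the congruence identity
\begin{eqnarray}
 a\hG - \hX^\sopt &=& \hG^{1/2}(a\ident - \hT)\hG^{1/2},
\end{eqnarray}
which transfers the positive/negative semidefiniteness of $a\hG - \hX^\sopt$ to the scalar condition that $a$ lies below or above the spectrum of $\hT$.

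For part (1), I will argue that if $a \le \lambda_{\min}(\hT)$ then $a\ident - \hT \le 0$, hence $a\hG - \hX^\sopt \le 0$, so $(a\hG - \hX^\sopt)_+ = 0$, and Eq.~(\ref{eq:sa}) collapses to $s(a) = \Tr~\hX^\sopt - a p$. For the bound $\lambda_{\min}(\hT) \ge 1/M$, I will use that $\hX^\sopt \in \mS_0$ gives $\hX^\sopt \ge \hrho_m$ for every $m \in \mI_M$; summing these $M$ inequalities and applying Eq.~(\ref{eq:G}) yields $M \hX^\sopt \ge \hG$, equivalently $\hT \ge (1/M)\ident$.

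For part (2), the same identity shows that $a \ge \lambda_{\max}(\hT)$ implies $a\hG - \hX^\sopt \ge 0$, so $(a\hG - \hX^\sopt)_+ = a\hG - \hX^\sopt$. Substituting into Eq.~(\ref{eq:sa}) and using $\Tr~\hG = \sum_{m \in \mI_M} \Tr~\hrho_m = 1$ produces $s(a) = a\Tr~\hG - ap = a(1-p)$.

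For part (3), I will invoke the variational representation
\begin{eqnarray}
 \Tr~\hA_+ &=& \max_{\ident \ge \hPhi \ge 0} \Tr(\hA\hPhi),
\end{eqnarray}
which is immediate from the spectral decomposition of $\hA$ (the maximum is attained at $\hPhi = \Proja(\hA)$). Applying this to $\hA = a\hG - \hX^\sopt$ writes $\Tr(a\hG - \hX^\sopt)_+$ as a pointwise supremum of functions of the form $a \mapsto a\Tr(\hG\hPhi) - \Tr(\hX^\sopt\hPhi)$, each affine in $a$; pointwise suprema of affine functions are convex, and adding the affine term $\Tr~\hX^\sopt - ap$ preserves convexity. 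I do not anticipate a real obstacle: all three parts reduce to routine spectral arguments once the congruence identity above is in hand, and the restriction to $\mH$ simply ensures that $\hG^{-1/2}$ makes sense.
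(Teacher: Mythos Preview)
Your proof is correct. Parts~(1) and~(2) are essentially identical to the paper's argument: both establish the equivalences $a \le \lambda_{\min}(\hG^{-1/2}\hX^\sopt\hG^{-1/2}) \Leftrightarrow \hX^\sopt \ge a\hG$ and $a \ge \lambda_{\max}(\hG^{-1/2}\hX^\sopt\hG^{-1/2}) \Leftrightarrow a\hG \ge \hX^\sopt$ (you make the congruence by $\hG^{1/2}$ explicit, the paper leaves it implicit), and both derive $\lambda_{\min} \ge 1/M$ by summing the $M$ constraints $\hX^\sopt \ge \hrho_m$.

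For part~(3) you take a genuinely different route. The paper invokes its Lemma~\ref{lemma:TrAB}, the subadditivity $\Tr~\hA_+ + \Tr~\hB_+ \ge \Tr(\hA+\hB)_+$, with $\hA = t(a\hG-\hX^\sopt)$ and $\hB = (1-t)(a'\hG-\hX^\sopt)$, which directly yields the midpoint inequality $ts(a)+(1-t)s(a') \ge s(ta+(1-t)a')$. You instead use the variational formula $\Tr~\hA_+ = \max_{0 \le \hPhi \le \ident}\Tr(\hA\hPhi)$ to exhibit $a \mapsto \Tr(a\hG-\hX^\sopt)_+$ as a pointwise supremum of affine functions of $a$, hence convex. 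Your argument is self-contained and is the standard convex-analysis proof; the paper's argument reuses an auxiliary lemma it has already built, so it fits more tightly into the appendix's internal economy. Both are equally rigorous and of comparable length.
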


The following proposition also holds (proof in Appendix~\ref{append:sa}):

\begin{proposition} \label{pro:sa}
Let $\tp(a) = \Tr[\hG \Proja(a\hG - \hX^\sopt)]$ and $\tp^+(a) = \Tr[\hG \Projb(a\hG - \hX^\sopt)]$;
then, the following conditions hold:
\begin{enumerate}[(1)]
 \setlength{\parskip}{0cm}
 \setlength{\itemsep}{0cm}
 \item If $a < a'$, then $\tp^+(a) \le \tp(a')$ holds.
	   In addition, $\tp(a)$ and $\tp^+(a)$ monotonically increase with respect to $a$.
 \item $a$ minimizes $s(a)$ if and only if $\tp(a) \le p \le \tp^+(a)$ holds.
\end{enumerate}
\end{proposition}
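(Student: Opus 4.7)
The plan is to exploit the convexity of $s(a)$ established in Proposition~\ref{pro:g}(3) and to pin down the subdifferential $\partial s(a)$ in closed form. First I would rewrite
\begin{eqnarray}
f(a) &\equiv& \Tr(a\hG - \hX^\sopt)_+ \nonumber \\
&=& \max_{0 \le \hPhi \le \ident} \Tr[(a\hG - \hX^\sopt)\hPhi], \nonumber
\end{eqnarray}
which exhibits $f$ as a pointwise maximum of affine functions of $a$; by the envelope formula, its subdifferential $\partial f(a)$ equals the set of slopes $\Tr(\hG \hPhi^\ast)$ as $\hPhi^\ast$ ranges over the (convex) maximizer set.

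Next I would apply Lemma~\ref{lemma:AB_dual} with $\hA = a\hG$ and $\hB = \hX^\sopt$ to describe these maximizers: the equality condition in that lemma says they are exactly the operators $\hPhi^\ast$ satisfying $\Proja(a\hG - \hX^\sopt) \le \hPhi^\ast \le \Projb(a\hG - \hX^\sopt)$. Because $\hPhi^\ast \mapsto \Tr(\hG \hPhi^\ast)$ is affine and $\hG \ge 0$, the image of this convex compact set is exactly the closed interval $[\tp(a),\,\tp^+(a)]$, whose endpoints are attained by the two extremal choices $\Proja(a\hG - \hX^\sopt)$ and $\Projb(a\hG - \hX^\sopt)$. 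Thus $\partial s(a) = [\tp(a) - p,\,\tp^+(a) - p]$.

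Item (2) is then Fermat's rule: $a$ minimizes the convex $s$ iff $0 \in \partial s(a)$, which is precisely $\tp(a) \le p \le \tp^+(a)$. For item (1), I would invoke the classical monotonicity of the subdifferential of a convex function of a real variable, $\sup \partial s(a) \le \inf \partial s(a')$ for $a < a'$, which immediately delivers $\tp^+(a) \le \tp(a')$. The individual monotonicity of $\tp$ and $\tp^+$ then drops out by chaining this with the trivial $\tp(a) \le \tp^+(a)$ (immediate from $\Proja \le \Projb$ together with $\hG \ge 0$).

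The step I expect to be mildly delicate is verifying that $\partial f(a)$ is literally the \emph{closed} interval $[\tp(a), \tp^+(a)]$ rather than something slightly smaller; this reduces to noting that the L\"owner-order interval of maximizers is nonempty and convex-compact and that its endpoints $\Proja$ and $\Projb$ are themselves valid maximizers, all already packaged by Lemma~\ref{lemma:AB_dual}. Once that is in place, everything else is one-dimensional convex analysis.
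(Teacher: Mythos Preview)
Your approach is correct and genuinely different from the paper's. You package both parts as consequences of a single subdifferential computation $\partial s(a)=[\tp(a)-p,\tp^+(a)-p]$, obtained via the envelope formula together with the equality condition in Lemma~\ref{lemma:AB_dual}; part~(2) is then Fermat's rule and part~(1) is the standard monotonicity $\sup\partial s(a)\le\inf\partial s(a')$ for $a<a'$. The paper instead treats the two parts by separate, more elementary arguments: for~(1) it applies Lemma~\ref{lemma:TrABApBp} with $\hA=a'\hG-\hX^\sopt$ and $\hB=a\hG-\hX^\sopt$, so that $\hA-\hB=(a'-a)\hG\ge0$ yields $\tp^+(a)\le\tp(a')$ in one line; for~(2) it introduces the primal problem~(\ref{eq:inc_main2t}) of~(\ref{eq:inc_dual2t}), uses strong duality to identify $s(a^\opt)$ with $\Tr[\hX^\sopt(\ident-\hPhi^\opt)]$, and then invokes Lemma~\ref{lemma:AB_dual} to sandwich $\hPhi^\opt$ between $\Proja$ and $\Projb$ (the converse being handled by the explicit convex combination $\hPhi=c\hPhi_a+(1-c)\hPhi_a^+$). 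Your route is more unified and avoids the auxiliary SDP~(\ref{eq:inc_main2t}); the paper's route never names the subdifferential or Danskin, relying only on the operator lemmas and duality already in hand. The ``delicate'' point you flag---that the slope set is exactly the closed interval $[\tp(a),\tp^+(a)]$---is precisely what the paper's explicit construction of $\hPhi$ in the converse of~(2) verifies.
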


\subsection{Algorithm for computing proposed upper bound} \label{subsec:inc_algorithm}

Propositions~\ref{pro:g} and \ref{pro:sa} are useful to update $a$ in Algorithm~1.
For example, since $\tp(a)$ monotonically increases with respect to $a$,
as stated in Proposition~\ref{pro:sa},
$a$ should be updated to a larger value if $\tp(a) < p$
or the smaller value if $\tp(a) > p$.

A concrete example of Algorithm~1 is shown in Algorithm~2.
Let $a^\opt \in \argmin_a s(a)$.
When initializing and updating $a$, Algorithm~2 exploits
Propositions~\ref{pro:g} and \ref{pro:sa}.
%Proposition~\ref{pro:sa} 'æ'èC$\tp(a^\opt) \le p \le \tp^+(a^\opt)$ 'ð–ž'½'·B
In steps~4 and 7, $a_\L$ and $a_\R$ are respectively initialized to
$\lambda_{\min}(\hG^{-1/2}\hX^\sopt\hG^{-1/2})$ and $\lambda_{\max}(\hG^{-1/2}\hX^\sopt\hG^{-1/2}) + \epsilon$,
where $\epsilon$ is a sufficiently small positive number.
Accordingly, since $a_\L\hG - \hX^\sopt \le 0$ and $a_\R\hG - \hX^\sopt \ge \epsilon\hG$ hold
(see Eqs.~(\ref{eq:XaG}) and (\ref{eq:XaG2}) in Appendix~\ref{append:g}),
$\tp(a_\L) = 0$ and $\tp(a_\R) = 1$ hold.
Thus, from $\tp(a_\L) \le p \le \tp(a_\R)$ and Proposition~\ref{pro:sa},
$a_\L \le a^\opt \le a_\R$ holds.
%It also follows from the monotonicity of $\tp(a)$ that
%$\tp(a_\L) \le p \le \tp(a_\R)$ holds.
In step~11, an estimated $a^\opt$, i.e., $a$, is computed on the assumption that
$\tp(a')$ is well approximated as linear in $a_\L \le a' \le a_\R$;
such $a$ satisfies $a_\L \le a \le a_\R$.
In steps~14--18, $a$ is substituted into $a_\L$ if $\tp(a) \le p$ (i.e., $a \le a^\opt$);
otherwise, $a$ is substituted into $a_\R$.
As a result, steps~10--19 guarantee that $a_\L$ and $a_\R$ satisfy $a_\L \le a^\opt \le a_\R$
and are closer to $a^\opt$ than those in the previous iteration.
The iteration process in Algorithm~2 stops after a fixed number of iterations;
alternatively, it may continue until certain stopping criteria
(e.g., the difference between $a_\L$ and $a_\R$ is sufficiently small) are met.
It is obvious that the difference between $\PCUip$ and $\PCopt_p$
monotonically decreases as the number of iterations, $J$, increases.

\begin{figure}
\begin{algorithm}[H]
\caption{Concrete example of computing $\PCUip$.}
\begin{algorithmic}[1]
\REQUIRE $\{ \hrho_m \}_{m \in \mI_M}$, $p$
\STATE Let $\hX^\sopt = \hX_{M-1}$, where $\hX_{M-1}$ is given by Eq.~(\ref{eq:me_UB_prop})
\STATE $\PCUip \leftarrow 1 - p$
\STATE /* Initialize $a_\L$ */
\STATE $a_\L \leftarrow \lambda_{\min}(\hG^{-1/2}\hX^\sopt\hG^{-1/2})$
\STATE $\PCUip \leftarrow \min \{ \PCUip, \Tr~\hX^\sopt - a_\L p \}$
\STATE /* Initialize $a_\R$ */
\STATE $a_\R \leftarrow \lambda_{\max}(\hG^{-1/2}\hX^\sopt\hG^{-1/2}) + \epsilon$
\STATE $\PCUip \leftarrow \min \{ \PCUip, a_\R (1 - p) \}$
\STATE /* Iterate */
\FOR{$j \leftarrow 1,2,\cdots,J$}
\STATE $a \leftarrow [[\tp(a_\R) - p] a_\L + [p - \tp(a_\L)] a_\R] / [\tp(a_\R) - \tp(a_\L)]$
\STATE Compute $s(a)$ using Eq.~(\ref{eq:sa})
\STATE $\PCUip \leftarrow \min \{ \PCUip, s(a) \}$
\IF{$\tp(a) \le p$}
\STATE $a_\L \leftarrow a$
\ELSE
\STATE $a_\R \leftarrow a$
\ENDIF
\ENDFOR
\ENSURE $\PCUip$
\end{algorithmic}
\end{algorithm}
\end{figure}

\subsection{Attainability of proposed upper bound} \label{subsec:inc_nas}

A necessary and sufficient condition for $\PCUip = \PCopt_p$ is determined as follows.
First, $a^\opt$ is taken as the optimal solution of $a$ in problem~DP.
Then, we consider solving the following optimization problem:
\begin{eqnarray}
\begin{array}{ll}
  {\rm minimize} & \Tr~\hZ \\
  {\rm subject~to} & \hZ \in \mS_{a^\opt} \\
\end{array} \label{eq:inc_dual_fixed_a}
\end{eqnarray}
with $\hZ$.
Since the optimal value of problem~DP is $\PCopt_p$,
the optimal value of problem (\ref{eq:inc_dual_fixed_a}) is $\PCopt_p + a^\opt p$.
Comparing Eqs.~(\ref{eq:me_dual}) and (\ref{eq:inc_dual_fixed_a})
indicates that Eq.~(\ref{eq:inc_dual_fixed_a}) can be regarded as
the problem of finding the success probability of a minimum-error measurement
for the set of $M+1$ quantum states $\rho' = \{ c \hrho_m \}_{m \in \mI_{M+1}}$,
with $\hrho_M = a^\opt \hG$,
where $c = 1 / (1 + a^\opt)$ is a constant such that $\sum_{m \in \mI_{M+1}} \Tr(c \hrho_m) = 1$.
Therefore,
% the same discussion as in Subsection~\ref{subsec:me_nas} applies here;
Theorem~\ref{thm:nas} and Corollary~\ref{cor:nas_supp} can be applied
in the case of optimal inconclusive measurements.

\subsection{Proposed lower bound} \label{subsec:inc_lower}

It is easy to extend the discussion in Subsection~\ref{subsec:me_lower}
to optimal inconclusive measurements.
Assume that $a_\L$ and $a_\R$ satisfying $\tp(a_\L) \le p \le \tp(a_\R)$ are given
(such $a_\L$ and $a_\R$ can be obtained from Algorithm~2).
$\Pi^{(a)} = \{ \hPi^{(a)}_m \}_{m \in \mI_{M+1}}$ is defined as
\begin{eqnarray}
 \hPi_m^{(a)} &=&
  \left\{
   \begin{array}{ll}
       |\ha_m^{(a)}|^2 - |\ha_{m-1}^{(a)}|^2, & 0 < m \le M, \\
       |\ha_0^{(a)}|^2, & m = 0, \\
   \end{array} \right. \nonumber \\
 \ha_m^{(a)} &=&
  \left\{
   \begin{array}{ll}
       \he^{(a)}_{m+1} \he^{(a)}_{m+2} \cdots \he^{(a)}_M, & 0 \le m < M, \\
       \ident, & m = M,
   \end{array} \right. \nonumber \\
 \he^{(a)}_m &=&
  \left\{
   \begin{array}{ll}
       \Projb[\ha_m(\hX_{m-1} - \hrho_m)\ha_m^\dag], & 0 < m \le M - 1, \\
       \Projb(\hX_{M-1} - a \hG), & m = M. \\
   \end{array} \right. \nonumber \\
 \label{eq:inc_lower_Pi}
\end{eqnarray}
Therefore, as discussed in Subsection~\ref{subsec:me_lower},
it is clear that $\Pi^{(a)}$ is a POVM.
In addition, since
\begin{eqnarray}
 \hPi_M^{(a)} &=& |\ha_M^{(a)}|^2 - |\ha_{M-1}^{(a)}|^2
  = \ident - \Projb(\hX_{M-1} - a \hG) \nonumber \\
 &=& \Proja(a \hG - \hX_{M-1})
\end{eqnarray}
holds, the inconclusive probability of the POVM $\Pi^{(a)}$
can be formulated as
\begin{eqnarray}
 \Tr(\hG\hPi_M^{(a)}) &=& \Tr[\hG\Proja(a \hG - \hX_{M-1})] = \tp(a).
\end{eqnarray}
Let us consider the POVM $\Pi^\bullet = \{ \hPi^\bullet_m \}_{m \in \mI_{M+1}}$,
where $\Pi^\bullet$ is defined as
\begin{eqnarray}
 \hPi^\bullet_m &=& \frac{[\tp(a_\R) - p] \hPi_m^{(a_\L)} + [p - \tp(a_\L)] \hPi_m^{(a_\R)}}{\tp(a_\R) - \tp(a_\L)}
  \label{eq:inc_lower_Pi_bullet}
\end{eqnarray}
if $\tp(a_\R) \neq \tp(a_\L)$, $\hPi^\bullet_m = \hPi_m^{(a_\L)}$ otherwise.
It is easy to verify that $\PI(\Pi^\bullet) = p$ holds.
We use the success probability of $\Pi^\bullet$, $\PC(\Pi^\bullet)$,
as a lower bound on $\PCopt_p$, denoted as $\PCLip$;
i.e., $\PCLip$ is given by
\begin{eqnarray}
 \PCLip &=& \PC(\Pi^\bullet) = \sum_{m=0}^{M-1} \Tr(\hrho_m\hPi^\bullet_m). \label{eq:PCLip}
\end{eqnarray}
From $\PI(\Pi^\bullet) = p$, $\PCLip \le \PCopt_p$ obviously holds.

\section{Computational complexity}

In this section, we discuss the computational complexity of
computing the proposed bounds.

First, the computational complexity of
computing $\PCUp$ and $\PCLp$ is investigated.
With regard to $\PCUp$, which is computed from Eq.~(\ref{eq:me_UB_prop}),
the major computational cost is computing $(\hrho_{m+1} - \hX_m)_+$.
It can be derived by computing the eigenvalues and their
corresponding eigenvectors of $\hrho_{m+1} - \hX_m$
and then using Eq.~(\ref{eq:A+}).
Let $N = \dim~\mH$.
The computation of the eigenvalues and eigenvectors
generally takes $O(N^3)$ time,
which indicates that the time complexity required by computing $\PCUp$ is $O[(M-1)N^3]$.
(Similarly, the time complexity of computing $\PCUp'$ in Eq.~(\ref{eq:PCUp'}) is $O[M(M-1)N^3]$.)
In contrast, the computation of $\PCU{Qiu}$ in Eq.~(\ref{eq:me_UB_Qiu})
requires $O[M(M-1)N^3]$ time,
which is $O(M)$ times longer than that for computing $\PCUp$.
Although $\PCUp$ is not always tighter than $\PCU{Qiu}$,
the numerical results presented in the next section demonstrate that
$\PCUp < \PCU{Qiu}$ holds on average.
With regard to $\PCLp$,
it is assumed that $X_m$ ~($m \in \mI_M$) in Eq.(\ref{eq:me_UB_prop}) is given;
from Eqs.~(\ref{eq:nas_am}) and (\ref{eq:PCLp}),
the major computational cost is computing $\Projb(\cdot)$ and
operator multiplication.
Both of them generally require $O(N^3)$ time,
and thus the computation of $\PCLp$ takes $O[(M-1)N^3]$.
Note that Ref.~\cite{Car-Vig-2010} provides a method of
computing the eigenvalues and eigenvectors of $\hrho_{m+1} - \hX_m$
from those of a corresponding $(\rank~\hrho_{m+1} + \rank~\hX_m)$-dimensional
square matrix;
this method can reduce the cost of computing $\PCUp$ and $\PCLp$
if $\rank~\hrho_{m+1} + \rank~\hX_m$ is smaller than $N$.

Next, the computational complexity of
computing $\PCUip$ and $\PCLip$ is investigated.
With regard to $\PCUip$, which is computed by Algorithm~2,
the major computational cost is computing
the following values:
(a) $\hX^\sopt$ in step~1,
(b) $\lambda_{\min}(\hG^{-1/2}\hX^\sopt\hG^{-1/2})$ in step~4
and $\lambda_{\max}(\hG^{-1/2}\hX^\sopt\hG^{-1/2})$ in step~7,
and (c) $s(a)$ in step~12 and $\tp(a)$ in step~14.
Since the computational complexities of
computing the $(-1/2)$-th power of an operator,
operator multiplication, and the eigenvalues and eigenvectors
are all $O(N^3)$,
the computations of values (a)--(c) above respectively
require $O[(M-1)N^3]$, $O(N^3)$, and $O(JN^3)$ times.
Therefore, the total computational complexity of computing $\PCUip$
is roughly $O[(M+J)N^3]$;
in particular, in the case of $M \gg J$, it is close to
that of computing $\PCUp$.
With regard to $\PCLip$,
we can make a similar discussion of $\PCLp$.
Assume that $X_m$ ~($m \in \mI_M$) in Eq.(\ref{eq:me_UB_prop}) is given.
From Eqs.~(\ref{eq:inc_lower_Pi}), (\ref{eq:inc_lower_Pi_bullet}),
and (\ref{eq:PCLip}),
the major computational cost is computing $\Projb(\cdot)$ and
operator multiplication, both of which generally take $O(N^3)$ time.
Thus, the total computational complexity of $\PCLip$ is $O(MN^3)$.

\section{Numerical examples} \label{sec:examples}

We discuss the accuracy of the proposed bounds on the success probabilities
of minimum-error and optimal inconclusive measurements
through numerical examples as follows.

One-hundred sets of randomly generated $M$ quantum states,
$\rho = \{ \hrho_m \}_{m \in \mI_M}$ with $\rank~\hrho_m = R$ $~(m \in \mI_M)$,
where $M$ and $R$ are parameters, were used in these examples.
Prior probabilities were also randomly selected.
The optimal success probability $\PCopt_p$ and 
the average relative errors between an upper or lower bound,
which is defined as $|\PCUip - \PCopt_p| / \PCopt_p$ or $|\PCLip - \PCopt_p| / \PCopt_p$,
were computed.
In the case of optimal inconclusive measurements,
the inconclusive probability, $p$, was randomly selected
in the range from 0 to 0.2.

\subsection{Case of minimum-error measurements}

Figure~\ref{fig:result_me_UB} shows the average relative errors of
the proposed upper bound, $\PCUp$, and Qiu {\it et al.}'s upper bound,
$\PCU{Qiu}$.
We observed that, at least in the range of $3 \le M \le 9$ and $R \le 9$,
the average relative error of $\PCUp$ is more than eight times
smaller than that of $\PCU{Qiu}$,
while $\PCUp < \PCU{Qiu}$ is not guaranteed for each quantum state set.
It also shows that the average relative error of $\PCUp$
increases gradually with increasing $M$,
while that of $\PCU{Qiu}$ increases rapidly.
Note that, in the case of $M = 2$, the average relative errors of
$\PCUp$ and $\PCU{Qiu}$ are always zero.

\begin{figure}[tb]
\centering
\includegraphics[scale=0.8]{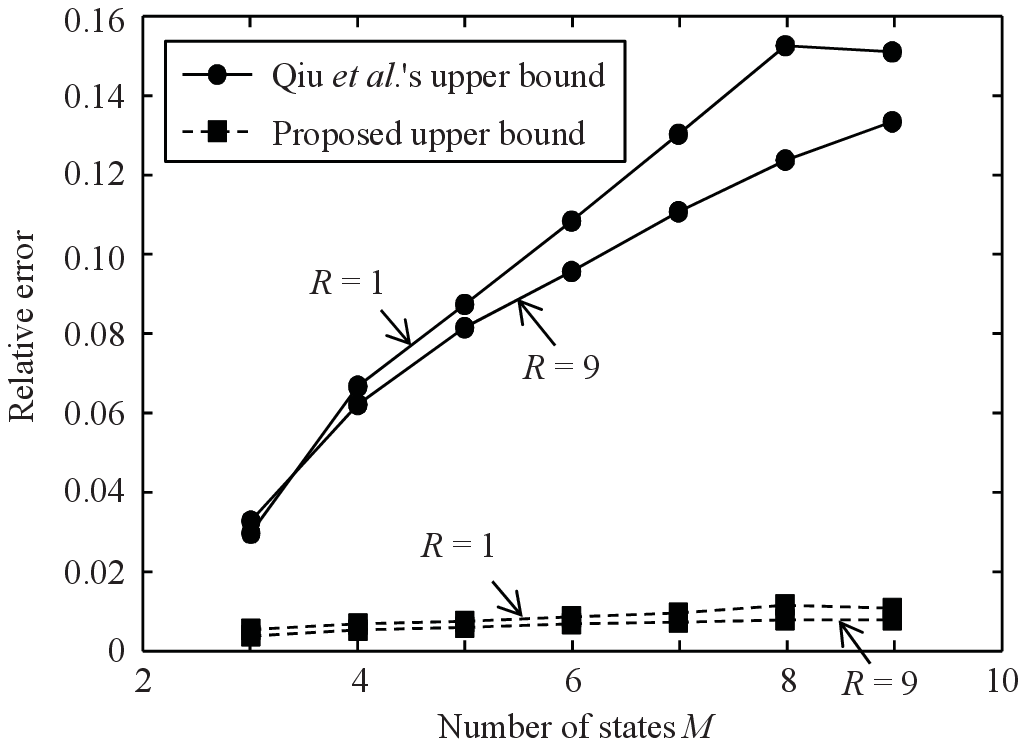}
\caption{Average relative errors of upper bounds, $\PCUp$, on the success probability
 of minimum-error measurements for $M$ quantum states.}
\label{fig:result_me_UB}
\end{figure}

Figure~\ref{fig:result_me_LB} shows the average relative errors of
the proposed lower bound, $\PCLp$, and the success probability of the SRM;
the former is more than 5.8~times smaller than the latter.

\begin{figure}[tb]
\centering
\includegraphics[scale=0.8]{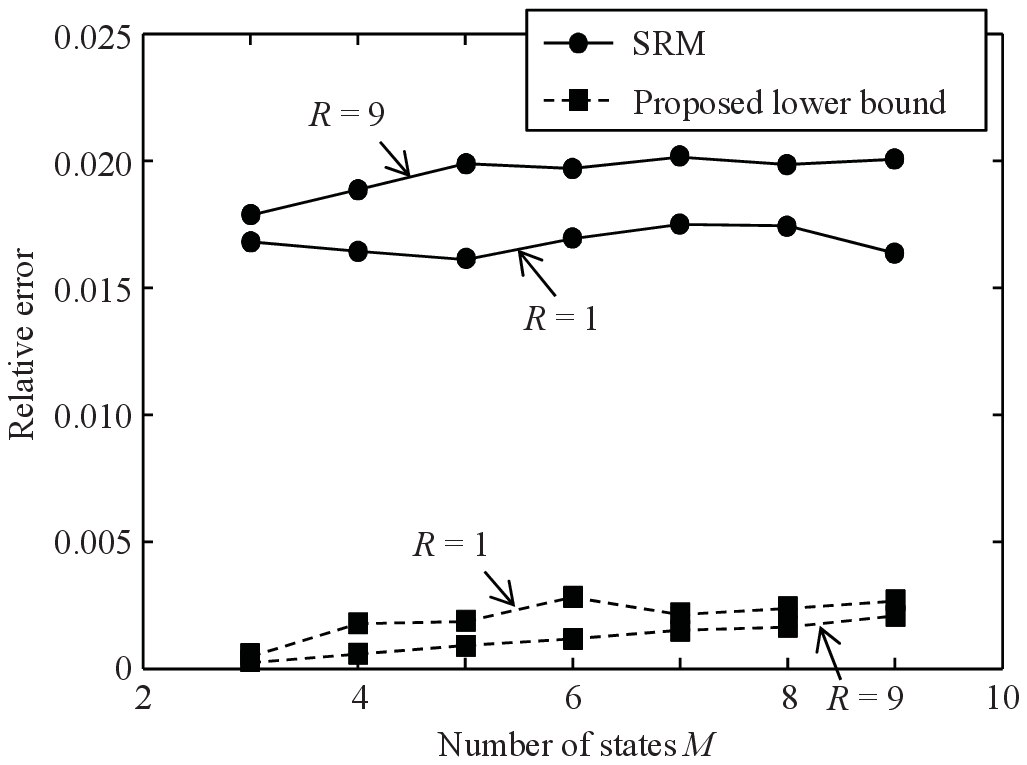}
\caption{Average relative errors of lower bounds, $\PCLp$, on the success probability
 of minimum-error measurements for $M$ quantum states.}
\label{fig:result_me_LB}
\end{figure}

\subsection{Case of optimal inconclusive measurements}

%The performance of the proposed upper bound, $\PCUip$,
%on the success probability, $\PCopt_p$, of optimal inconclusive measurements
%was evaluated as follows.
Figure~\ref{fig:result_inc_binary} shows
the average relative errors of $\PCUip$ with $J = 2$ and $3$
in the case of binary state sets.
It also shows the upper bound proposed by Sugimoto {\it et al.}
\cite{Sug-Has-Hor-Hay-2009},
which is based on the fidelity between the binary states.
In the case of $R = 1$, the analytical expression of the optimal value, $\PCopt_p$,
is given \cite{Sug-Has-Hor-Hay-2009,Nak-Usu-2012-receiver};
Sugimoto {\it et al.}'s upper bound exploits this expression,
and achieves $\PCopt_p$ when $R = 1$.
Although the proposed upper bound has a nonzero error when $R = 1$,
at least in the range of $2 \le M \le 9$,
the average relative error of $\PCUip$ is more than three times smaller than that of
Sugimoto {\it et al.}'s upper bound.

\begin{figure}[tb]
\centering
\includegraphics[scale=0.8]{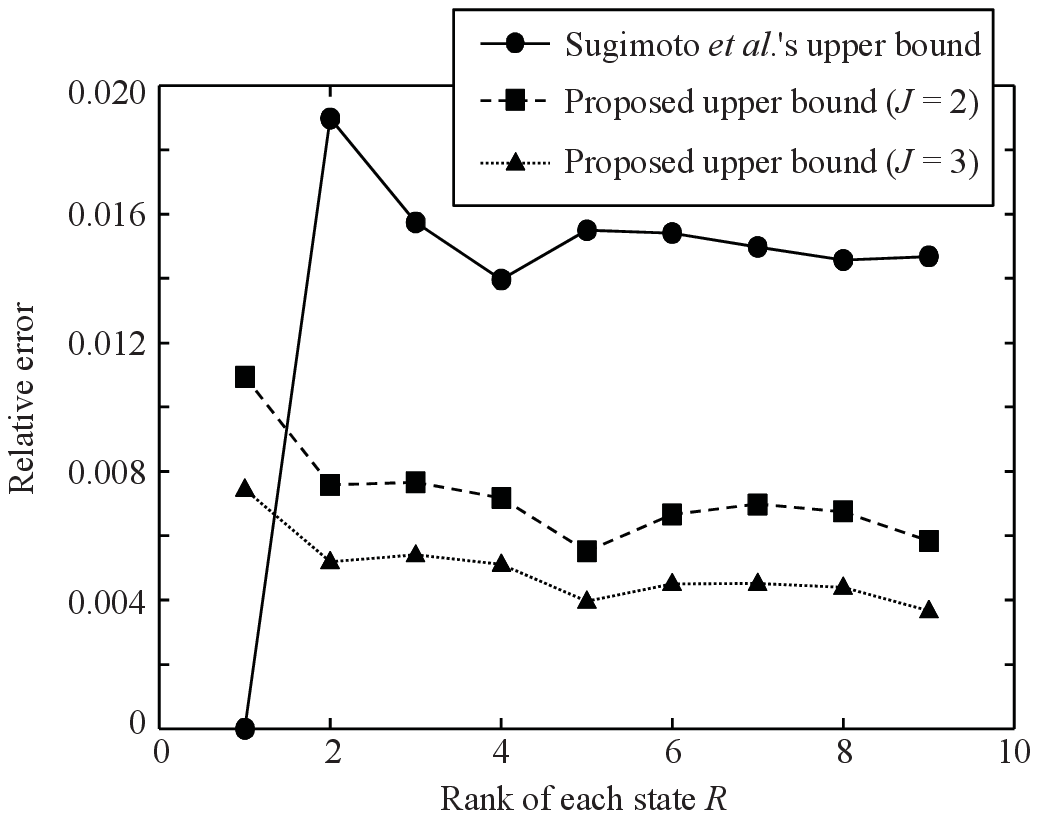}
\caption{Average relative errors of upper bounds, $\PCUip$,
 on the success probability of optimal inconclusive measurements
 for binary quantum state sets (i.e., $M = 2$).}
\label{fig:result_inc_binary}
\end{figure}

Figures~\ref{fig:result_inc_UB} and \ref{fig:result_inc_LB} respectively show
the average relative errors of the proposed upper and lower bounds,
$\PCUip$ and $\PCLip$, in the case of $M \ge 3$.
It shows that the average relative error increases gradually
with increasing $M$.
In each case, we observed that
at least in the range of $M \le 9$ and $R \le 9$
the average relative error is less than 0.037 and 0.032 with
$J = 2$ and $3$, respectively.

\begin{figure}[tb]
\centering
\includegraphics[scale=0.8]{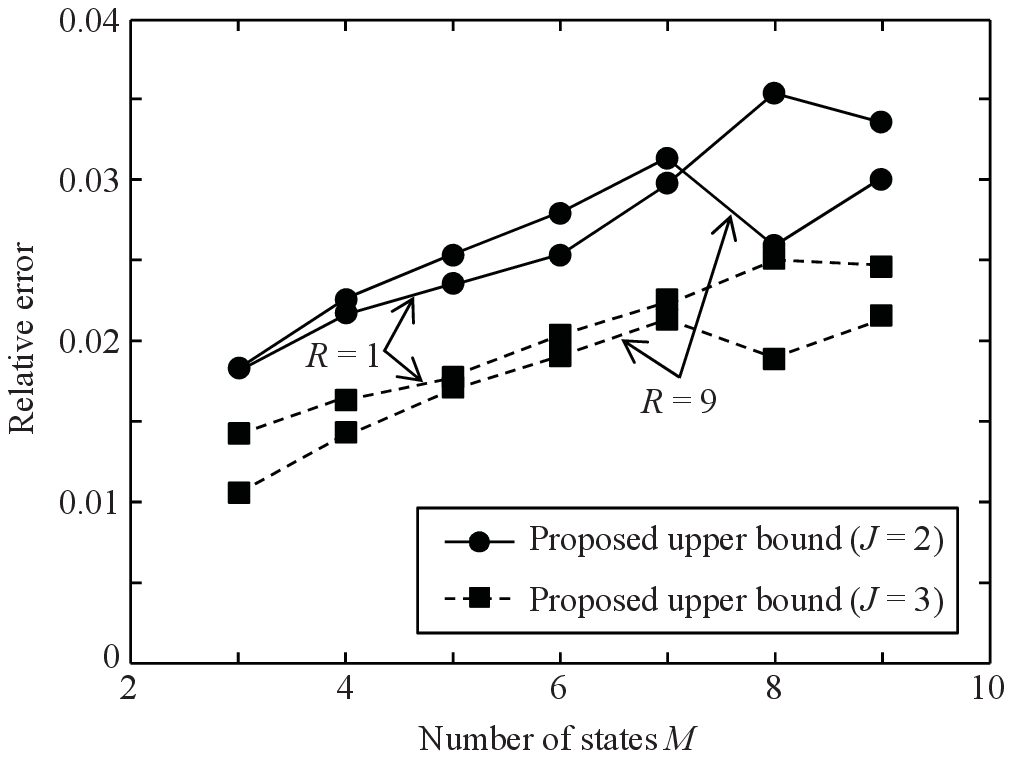}
\caption{Average relative errors of upper bounds, $\PCUip$,
 on the success probability of optimal inconclusive measurements
 for $M \ge 3$ quantum states.}
\label{fig:result_inc_UB}
\end{figure}

\begin{figure}[tb]
\centering
\includegraphics[scale=0.8]{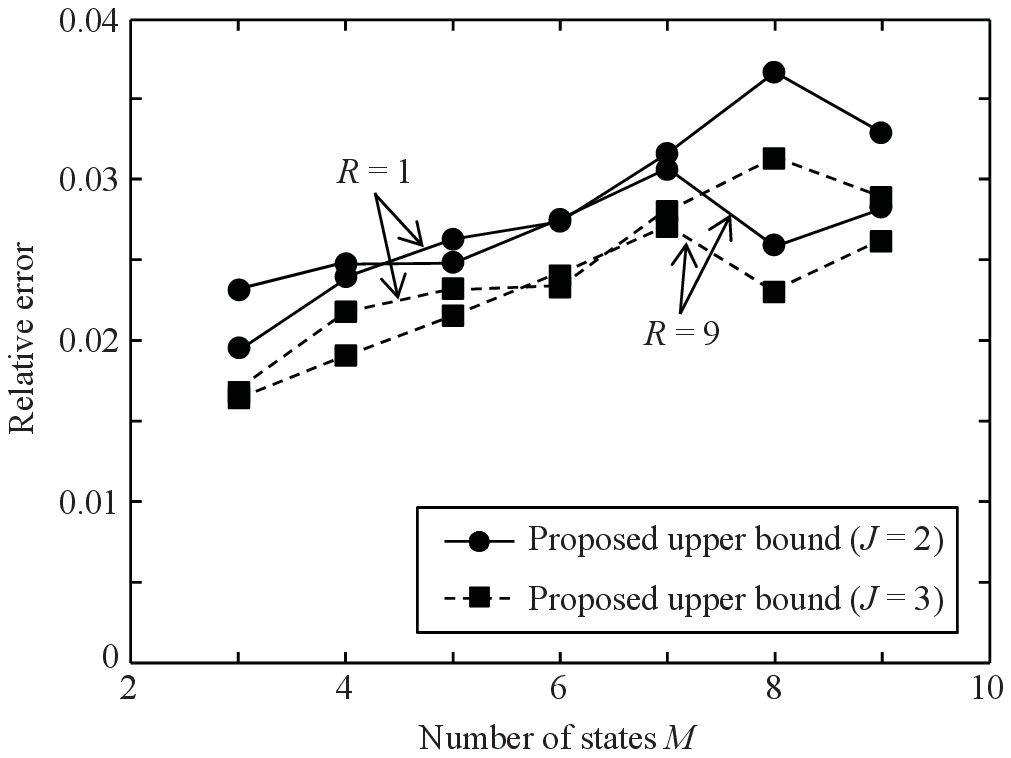}
\caption{Average relative errors of lower bounds, $\PCLip$,
 on the success probability of optimal inconclusive measurements
 for $M \ge 3$ quantum states.}
\label{fig:result_inc_LB}
\end{figure}

\section{Conclusion}

We proposed upper and lower bounds on the success probabilities of
minimum-error and optimal inconclusive measurements.
The proposed upper bounds are suboptimal solutions to the dual problems of
the optimal state discrimination problems.
The proposed lower bounds are obtained from the success probabilities
of POVMs corresponding to suboptimal solutions to the dual problems.
Numerical examples show that, on average,
the proposed upper bound for minimum-error measurements
is tighter than Qiu {\it et al.}'s one,
and the proposed bound for optimal inconclusive measurements
is tighter than Sugimoto {\it et al.}'s one in the case of
binary mixed quantum states.

\begin{acknowledgments}
 We are grateful to O. Hirota of Tamagawa University for support.
 T. S. U. was supported (in part) by JSPS KAKENHI (Grant No.16H04367).
\end{acknowledgments}

\appendix

\section{Lemmas on Hermitian operators} \label{append:Hermite}

Let $\lambda_0(\hH) \ge \lambda_1(\hH) \ge \cdots \ge \lambda_{N-1}(\hH)$ be
the ordered eigenvalues of an $N$-dimensional Hermitian operator $\hH$.

\begin{lemma} \label{lemma:AB}
 $\Tr~\hA_+ \ge \Tr~\hB_+$ holds for any Hermitian operators $\hA$ and $\hB$
 with $\hA \ge \hB$,
 where the equality holds if and only if $\hA_+ = \hB_+$.
\end{lemma}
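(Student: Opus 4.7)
The plan is to derive both parts of the lemma from the variational identity
\[
\Tr~\hH_+ \;=\; \max_{0 \le \hPhi \le \ident}\, \Tr(\hH\hPhi),
\]
valid for every Hermitian $\hH$ on $\mH$, with the maximum attained exactly when $\Proja(\hH) \le \hPhi \le \Projb(\hH)$. This identity is the $\hB=0$ instance of Lemma~\ref{lemma:AB_dual} (together with its equality characterization), so I may cite it as already proved.

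For the inequality, I would use the specific test operator $\hPhi^\star = \Proja(\hB)$, which saturates the variational formula for $\hB$, so that $\Tr(\hB\hPhi^\star) = \Tr~\hB_+$. Since $\hA \ge \hB$ and $\hPhi^\star \ge 0$, one has $\Tr(\hA\hPhi^\star) \ge \Tr(\hB\hPhi^\star) = \Tr~\hB_+$, and the variational formula applied to $\hA$ dominates the left side by $\Tr~\hA_+$, chaining to the claimed inequality.

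For the equality statement, the ``if'' direction is immediate. For the ``only if'' direction, suppose $\Tr~\hA_+ = \Tr~\hB_+$; then the chain of inequalities above collapses, giving $\Tr[(\hA-\hB)\hPhi^\star] = 0$. Rewriting this as $\Tr[(\hA-\hB)^{1/2}\hPhi^\star(\hA-\hB)^{1/2}] = 0$ exhibits the trace of a positive semidefinite operator, so in fact $(\hA-\hB)\hPhi^\star = 0$, i.e.\ $\hA\hPhi^\star = \hB\hPhi^\star = \hB_+$. Taking adjoints shows that $\hA$ commutes with the projector $\hPhi^\star = \Proja(\hB)$, hence leaves $V := \supp~\hB_+$ and its orthogonal complement $V^\perp$ invariant. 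On $V$ one has $\hA|_V = \hB_+|_V \ge 0$, so the positive part of $\hA|_V$ equals $\hB_+$. Restricting the trace identity to $V^\perp$ forces $\Tr(\hA|_{V^\perp})_+ = 0$, whence $\hA|_{V^\perp} \le 0$, and assembling the two blocks yields $\hA_+ = \hB_+$.

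The only nontrivial step is the equality analysis, where the scalar identity $\Tr[(\hA-\hB)\hPhi^\star] = 0$ must be promoted to the operator identity $(\hA-\hB)\hPhi^\star = 0$ (handled by the positive-semidefinite trace argument) and then leveraged through the commutation $\hA\hPhi^\star = \hPhi^\star\hA$ to decompose $\hA$ into blocks aligned with $\supp~\hB_+$ and its complement; the inequality itself is essentially a one-line application of the variational formula.
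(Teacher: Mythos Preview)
Your argument is correct, and it differs from the paper's in both halves. For the inequality, the paper invokes the Weyl--type monotonicity of ordered eigenvalues (if $\hA\ge\hB$ then $\sum_{n\le k}\lambda_n(\hA)\ge\sum_{n\le k}\lambda_n(\hB)$, citing an external majorization reference) and combines it with $\Tr~\hH_+=\sum_{n<t}\lambda_n(\hH)$; your route through the variational identity $\Tr~\hH_+=\max_{0\le\hPhi\le\ident}\Tr(\hH\hPhi)$, which is the $\hB=0$ case of Lemma~\ref{lemma:AB_dual}, is more self-contained within the paper and avoids the external citation. For the equality case, the paper works directly with $\hA_+$: from $\hA_+\ge\hB$ it gets $\hP\hA_+\hP\ge\hB_+$ (where $\hP=\Proja(\hB)$), sandwiches $\Tr(\hP\hA_+\hP)$ between $\Tr~\hA_+$ and $\Tr~\hB_+$, deduces $\supp~\hA_+\subseteq\supp~\hP$ so that $\hA_+=\hP\hA_+\hP\ge\hB_+$, and finishes by the zero-trace-of-PSD argument. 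You instead extract $(\hA-\hB)\hPhi^\star=0$ from the collapsed chain, use it to show $\hA$ commutes with $\hPhi^\star$, and block-decompose $\hA$ over $V\oplus V^\perp$. The paper's version is slightly slicker in that it never needs commutation of $\hA$ itself with $\hPhi^\star$; yours yields the marginally stronger structural byproduct that $\hA$ (not just $\hA_+$) respects the splitting $V\oplus V^\perp$.
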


\begin{proof}
First, we show $\Tr~\hA_+ \ge \Tr~\hB_+$.
Let $N$ be the dimension of the space on which $\hA$ and $\hB$ act.
Since $\hA \ge \hB$,
$\sum_{n=0}^k \lambda_n(\hA) \ge \sum_{n=0}^k \lambda_n(\hB)$ holds for any $k \in \mI_N$ \cite{Mar-2010}.
In contrast, for any $N$-dimensional Hermitian operator $\hH$, the following can be easily obtained:
\begin{eqnarray}
 \Tr~\hH_+ &\ge& \sum_{n=0}^k \lambda_n(\hH), ~ \forall k \in \mI_N. \label{eq:TrH}
\end{eqnarray}
Therefore, it follows that
\begin{eqnarray}
 \Tr~\hA_+ \ge \sum_{n=0}^{t-1} \lambda_n(\hA) \ge \sum_{n=0}^{t-1} \lambda_n(\hB) = \Tr~\hB_+,
\end{eqnarray}
where $t$ is the number of positive eigenvalues of $\hB$.

Next, we show that $\hA_+ = \hB_+$ holds if $\Tr~\hA_+ = \Tr~\hB_+$
(the converse is obvious).
Let $\hP = \Proja(\hB)$.
From $\hA_+ \ge \hA$, $\hA_+ \ge \hB$ holds.
Premultiplying and postmultiplying $\hA_+ \ge \hB$ with $\hP$ yields $\hP \hA_+ \hP \ge \hB_+$.
Thus, we have
\begin{eqnarray}
 \Tr~\hA_+ &\ge& \Tr(\hA_+^{1/2}\hP\hA_+^{1/2}) = \Tr(\hP\hA_+\hP) \nonumber \\
 &\ge& \Tr~\hB_+ = \Tr~\hA_+, \label{eq:lemma_AB_tr}
\end{eqnarray}
where the first inequality follows from $\hA_+ \ge \hA_+^{1/2}\hP\hA_+^{1/2}$,
which is obtained from $\ident \ge \hP$.
From Eq.~(\ref{eq:lemma_AB_tr}), $\Tr~\hA_+ = \Tr(\hP\hA_+\hP)$ holds.
It thus follows that $\supp~\hA_+ \subseteq \supp~\hP$, i.e., $\hA_+ = \hP\hA_+\hP$,
which gives $\hA_+ \ge \hB_+$.
Since $\Tr(\hA_+ - \hB_+) = 0$, $\hA_+ - \hB_+ = 0$ holds.
\QED
\end{proof}

\begin{lemma} \label{lemma:TrAB}
For any Hermitian operators $\hA$ and $\hB$, $\Tr~\hA_+ + \Tr~\hB_+ \ge \Tr(\hA + \hB)_+$ holds.
\end{lemma}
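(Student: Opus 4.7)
The plan is to reduce this to Lemma~\ref{lemma:AB}, which has just been established. The key observation is that $\hA_+ + \hB_+$ dominates $\hA + \hB$ in the L\"owner order, since $\hA_+ \ge \hA$ and $\hB_+ \ge \hB$ by the definition (\ref{eq:A+}) of the positive part.

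First I would note that $\hA_+ + \hB_+$ is itself positive semidefinite, being the sum of two positive semidefinite operators. Therefore $(\hA_+ + \hB_+)_+ = \hA_+ + \hB_+$, so $\Tr(\hA_+ + \hB_+)_+ = \Tr~\hA_+ + \Tr~\hB_+$. Next I would apply Lemma~\ref{lemma:AB} to the pair of Hermitian operators $\hA_+ + \hB_+$ and $\hA + \hB$: from $\hA_+ + \hB_+ \ge \hA + \hB$, the lemma gives $\Tr(\hA_+ + \hB_+)_+ \ge \Tr(\hA + \hB)_+$. Combining the two displayed relations yields the claim.

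There is essentially no obstacle once the right upper bound is chosen; the whole content of the proof is the one-line order inequality $\hA_+ + \hB_+ \ge \hA + \hB$ together with the monotonicity statement of Lemma~\ref{lemma:AB}. The only thing worth checking carefully is that Lemma~\ref{lemma:AB} is being applied in the correct direction (the larger operator in the order has the larger trace of the positive part), which matches the hypothesis $\hA_+ + \hB_+ \ge \hA + \hB$ here.
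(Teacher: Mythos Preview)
Your proof is correct and is essentially identical to the paper's own argument: both note that $\hA_+ + \hB_+ \ge \hA + \hB$, use positive semidefiniteness to write $\Tr~\hA_+ + \Tr~\hB_+ = \Tr(\hA_+ + \hB_+)_+$, and then apply Lemma~\ref{lemma:AB}.
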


\begin{proof}
$\hA_+ \ge \hA$ and $\hB_+ \ge \hB$ gives $\hA_+ + \hB_+ \ge \hA + \hB$.
Thus, from Lemma~\ref{lemma:AB}, we have
\begin{eqnarray}
  \hspace{-1em}
   \Tr~\hA_+ + \Tr~\hB_+ = \Tr(\hA_+ + \hB_+)_+ \ge \Tr(\hA + \hB)_+.
\end{eqnarray}
\QED
\end{proof}

\begin{lemma} \label{lemma:TrABApBp}
For any Hermitian operator $\hA$ and $\hB$ with $\hA \ge \hB$,
\begin{eqnarray}
  \Tr[(\hA - \hB) \Proja(\hA)] &\ge& \Tr[(\hA - \hB) \Projb(\hB)]. \label{eq:TrABApBp}
\end{eqnarray}
\end{lemma}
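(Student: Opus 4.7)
The plan is to rewrite the desired inequality in a symmetric form and invoke a single elementary extremal property of the positive part. Expanding the traces and using the identities $\Tr[\hA\Proja(\hA)] = \Tr~\hA_+$ and $\Tr[\hB\Projb(\hB)] = \Tr~\hB_+$ (both by direct spectral calculation, since $\Proja(\hA)$ and $\Projb(\hB)$ project onto the supports of $\hA_+$ and $\hB_+$ respectively), the claim becomes the pleasingly symmetric inequality
\begin{eqnarray}
\Tr~\hA_+ + \Tr~\hB_+ \ge \Tr[\hA\Projb(\hB)] + \Tr[\hB\Proja(\hA)]. \nonumber
\end{eqnarray}

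The main tool is then the one-line observation that for any Hermitian $\hH$ and any operator $R$ with $\ident \ge R \ge 0$,
\begin{eqnarray}
\Tr(\hH R) \le \Tr~\hH_+. \nonumber
\end{eqnarray}
I would prove this by splitting $\hH = \hH_+ - (-\hH)_+$, noting $\Tr[\hH_+ R] \le \Tr~\hH_+$ since $\hH_+ \ge 0$ and $\ident - R \ge 0$, and $\Tr[(-\hH)_+ R] \ge 0$ since both factors are positive semidefinite.

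To finish, I apply this bound twice: with $(\hH,R) = (\hA,\Projb(\hB))$ and with $(\hH,R) = (\hB,\Proja(\hA))$. These give $\Tr[\hA\Projb(\hB)] \le \Tr~\hA_+$ and $\Tr[\hB\Proja(\hA)] \le \Tr~\hB_+$; summing yields exactly the reduction above, completing the proof. I do not anticipate any real obstacle; the only subtlety is recognizing the correct pairing that puts each ``off-diagonal'' trace opposite the matching positive part. One noteworthy feature of this argument is that the hypothesis $\hA \ge \hB$ is never invoked, so the inequality as stated actually holds for arbitrary Hermitian $\hA$ and $\hB$; the authors presumably impose $\hA \ge \hB$ only because that is the regime in which Lemma~\ref{lemma:TrABApBp} will be applied.
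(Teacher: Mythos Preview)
Your proof is correct and is essentially the same as the paper's: the paper frames the key step as the optimization $\max_{\ident\ge\hPhi\ge 0}\Tr(\hC\hPhi)$, notes that both $\Proja(\hC)$ and $\Projb(\hC)$ are optimizers (with optimal value $\Tr~\hC_+$), and then instantiates $\hC=\hA$ and $\hC=\hB$ to obtain the two inequalities you derive; you simply rearrange first and prove the extremal bound $\Tr(\hH R)\le\Tr~\hH_+$ inline. Your closing observation that the hypothesis $\hA\ge\hB$ is never used is also correct and worth noting.
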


\begin{proof}
Let us consider the following optimization problem:
\begin{eqnarray}
  \begin{array}{ll}
   {\rm maximize} & \Tr(\hC \hPhi) \\
   {\rm subject~to} & \ident \ge \hPhi \ge 0, \\ \label{eq:TrABApBp_main}
  \end{array}
\end{eqnarray}
where $\hC$ is a Hermitian operator.
It is clear that $\hPhi = \Proja(\hC)$ and $\hPhi = \Projb(\hC)$ are
optimal solutions to this problem.
Substituting $\hC = \hA$ and $\hC = \hB$, respectively, into problem (\ref{eq:TrABApBp_main})
gives
\begin{eqnarray}
  \Tr[\hA \Proja(\hA)] &\ge& \Tr[\hA \Projb(\hB)], \nonumber \\
  \Tr[\hB \Proja(\hA)] &\le& \Tr[\hB \Projb(\hB)].
\end{eqnarray}
Therefore, Eq.~(\ref{eq:TrABApBp}) holds.
\QED
\end{proof}

\begin{lemma} \label{lemma:AHA_ge}
For any operator $\hA$ and any positive semidefinite operators $\hrho$, $\hX$, and $\hE$
with $\ident \ge \hE \ge 0$,
\begin{eqnarray}
  \lefteqn{ \Tr[\hA[\hX + (\hrho - \hX)_+]\hA^\dag] } \nonumber \\
  &\ge& \Tr[\hA\hrho\hA^\dag(\ident - \hE) + \hA\hX\hA^\dag\hE]. \label{eq:AHA_ge_tr}
\end{eqnarray}
The equality holds if and only if
\begin{eqnarray}
  \hA(\hX - \hrho)_+\hA^\dag &=& [\hA(\hX - \hrho)\hA^\dag]_+, \label{eq:AHA_nas_notr} \\
  \Projb[\hA(\hX - \hrho)\hA^\dag] &\ge& \hE \ge \Proja[\hA(\hX - \hrho)\hA^\dag].
   \label{eq:AHA_nas_proj}
\end{eqnarray}
\end{lemma}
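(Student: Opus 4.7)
The plan is to rewrite the difference between the two sides of (\ref{eq:AHA_ge_tr}) as a manifestly nonnegative sum. Using the elementary identity $\hH_+ = \hH + (-\hH)_+$ for a Hermitian operator $\hH$, applied with $\hH = \hrho - \hX$, one obtains
\begin{equation*}
\hX + (\hrho - \hX)_+ = \hrho + (\hX - \hrho)_+.
\end{equation*}
I would then introduce the two positive semidefinite operators $\hM = \hA(\hX - \hrho)_+\hA^\dag$ and $\hN = \hA(\hrho - \hX)_+\hA^\dag$, so that $\hA(\hX - \hrho)\hA^\dag = \hM - \hN$. A direct expansion collapses both sides of (\ref{eq:AHA_ge_tr}) to
\begin{equation*}
\mathrm{LHS} - \mathrm{RHS} = \Tr[\hM(\ident - \hE)] + \Tr[\hN\hE].
\end{equation*}
Each term is the trace of a product of two positive semidefinite operators and is therefore nonnegative, which yields (\ref{eq:AHA_ge_tr}).

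For the equality characterization, both traces must vanish. Since $\Tr(\hY\hZ) = 0$ for $\hY,\hZ \ge 0$ forces $\hY\hZ = 0$, one deduces $\hM(\ident - \hE) = 0$ and $\hN\hE = 0$, which (using $0 \le \hE \le \ident$) are equivalent to the sandwich $\Proja(\hM) \le \hE \le \ident - \Proja(\hN)$. For such an $\hE$ to exist, the support projections of $\hM$ and $\hN$ must be orthogonal, i.e., $\hM\hN = 0$. Writing $\hB = \hA(\hX - \hrho)\hA^\dag$, the decomposition $\hB = \hM - \hN$ with $\hM,\hN \ge 0$ and $\hM\hN = 0$ must coincide with the Jordan decomposition of $\hB$, yielding $\hM = \hB_+$ and $\hN = \hB_-$; this is exactly (\ref{eq:AHA_nas_notr}). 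Under this identification, $\Proja(\hM) = \Proja(\hB)$ and $\ident - \Proja(\hN) = \Projb(\hB)$, so the sandwich reduces to (\ref{eq:AHA_nas_proj}). The converse is immediate: given (\ref{eq:AHA_nas_notr}) and (\ref{eq:AHA_nas_proj}), one has $\hM(\ident - \hE) = 0 = \hN\hE$ and hence equality in (\ref{eq:AHA_ge_tr}).

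The main technical subtlety is the uniqueness step: arguing that a decomposition $\hB = \hM - \hN$ with $\hM,\hN \ge 0$ and $\hM\hN = 0$ must be the Jordan decomposition of $\hB$. Here I would spell out that $\hM\hN = 0$ makes $\supp\hM$ and $\supp\hN$ orthogonal subspaces, so on each subspace $\hB$ is diagonal with nonnegative (respectively nonpositive) eigenvalues, forcing $\hM = \hB_+$ and $\hN = \hB_-$ by inspection of the spectral decomposition. A secondary care is required to go from $\Tr[\hM(\ident - \hE)] = 0$ to the operator identity $\hM(\ident - \hE) = 0$, but this is a standard consequence of trace positivity. Everything else is routine combination of these two facts with the definitions of $\Proja$ and $\Projb$.
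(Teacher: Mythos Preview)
Your proof is correct and takes a genuinely different route from the paper's. The paper proves the inequality as a two-step chain: first it uses Lemma~\ref{lemma:AB} (monotonicity of $\Tr(\cdot)_+$) to pass from $\Tr[\hA(\hX-\hrho)_+\hA^\dag]$ to $\Tr[\hA(\hX-\hrho)\hA^\dag]_+$, and then it invokes Lemma~\ref{lemma:AB_dual} (the binary-discrimination bound) to compare with the $\hE$-dependent right-hand side. The equality conditions (\ref{eq:AHA_nas_notr}) and (\ref{eq:AHA_nas_proj}) are then read off separately from the equality cases of those two lemmas.

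You instead collapse the whole difference into the single identity $\mathrm{LHS}-\mathrm{RHS}=\Tr[\hM(\ident-\hE)]+\Tr[\hN\hE]$ with $\hM=\hA(\hX-\hrho)_+\hA^\dag$ and $\hN=\hA(\hrho-\hX)_+\hA^\dag$, and then argue from trace positivity and the uniqueness of the Jordan decomposition of $\hB=\hM-\hN$. This is more elementary and self-contained: it needs neither Lemma~\ref{lemma:AB_dual} nor Lemma~\ref{lemma:AB}, only the standard facts that $\Tr(\hY\hZ)=0$ for $\hY,\hZ\ge 0$ forces $\hY\hZ=0$, and that a splitting $\hB=\hM-\hN$ with $\hM,\hN\ge 0$ and $\hM\hN=0$ is the Jordan decomposition. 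The paper's approach, on the other hand, makes the logical structure of the equality case transparent by tying each condition to a specific inequality in the chain, and it reuses lemmas that are needed elsewhere in the paper anyway. One small phrasing point: when you write ``for such an $\hE$ to exist'' you really mean ``since the given $\hE$ satisfies the sandwich''; the orthogonality $\hM\hN=0$ follows from $\Proja(\hM)\le\hE\le\ident-\Proja(\hN)$ for that particular $\hE$, not from an existence argument.
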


\begin{proof}
It follows that
\begin{eqnarray}
  \lefteqn{ \Tr[\hA[\hX + (\hrho - \hX)_+]\hA^\dag] } \nonumber \\
  &=& \Tr[\hA[\hrho + (\hX - \hrho)_+]\hA^\dag] \nonumber \\
  &\ge& \Tr[\hA\hrho\hA^\dag + (\hA\hX\hA^\dag - \hA\hrho\hA^\dag)_+] \nonumber \\
  &\ge& \Tr[\hA\hrho\hA^\dag(\ident - \hE) + \hA\hX\hA^\dag\hE], \label{eq:AHA_ge}
\end{eqnarray}
where the second line follows from $\hX + (\hrho - \hX)_+ = \hrho + (\hX - \hrho)_+$.
The third line follows from Lemma~\ref{lemma:AB} by substituting
$\hA(\hX-\hrho)_+\hA^\dag$ and $\hA(\hX-\hrho)\hA^\dag$ for $\hA$ and $\hB$, respectively.
Note that $\hA(\hX-\hrho)_+\hA^\dag \ge \hA(\hX-\hrho)\hA^\dag$ holds
from $(\hX-\hrho)_+ \ge \hX-\hrho$.
The fourth line follows from Lemma~\ref{lemma:AB_dual}.
From Lemmas~\ref{lemma:AB_dual} and \ref{lemma:AB},
the equality in Eq.~(\ref{eq:AHA_ge}) holds if and only if
Eqs.~(\ref{eq:AHA_nas_notr}) and (\ref{eq:AHA_nas_proj}) hold.
\QED
\end{proof}

\begin{lemma} \label{lemma:supp}
For any positive semidefinite operators $\hA$ and $\hB$ with $\supp~\hA \subseteq \supp~\hB$
and any operator $\hC$,
$\supp(\hC\hA\hC^\dag) \subseteq \supp(\hC\hB\hC^\dag)$ holds.
Moreover, if $\supp~\hA = \supp~\hB$, then $\supp(\hC\hA\hC^\dag) = \supp(\hC\hB\hC^\dag)$ holds.
\end{lemma}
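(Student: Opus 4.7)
The plan is to translate the support inclusion into a kernel inclusion and then push vectors through $\hC^\dag$. For positive semidefinite operators, $\supp~\hA \subseteq \supp~\hB$ is equivalent to $\Ker~\hB \subseteq \Ker~\hA$, which in turn is equivalent to $\hA^{1/2}|\psi\rangle = 0$ whenever $\hB^{1/2}|\psi\rangle = 0$. So I would work in the dual (kernel) formulation throughout, since kernels pull back cleanly under the map $|\psi\rangle \mapsto \hC^\dag|\psi\rangle$.

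First I would take an arbitrary $|\psi\rangle \in \Ker(\hC\hB\hC^\dag)$. Then $\langle\psi|\hC\hB\hC^\dag|\psi\rangle = \|\hB^{1/2}\hC^\dag|\psi\rangle\|^2 = 0$, so $\hC^\dag|\psi\rangle \in \Ker~\hB^{1/2} = \Ker~\hB$. By hypothesis $\Ker~\hB \subseteq \Ker~\hA$, hence $\hA^{1/2}\hC^\dag|\psi\rangle = 0$, which gives $\langle\psi|\hC\hA\hC^\dag|\psi\rangle = 0$ and therefore $|\psi\rangle \in \Ker(\hC\hA\hC^\dag)$ (using positive semidefiniteness of $\hC\hA\hC^\dag$). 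Taking orthogonal complements yields $\supp(\hC\hA\hC^\dag) \subseteq \supp(\hC\hB\hC^\dag)$, which is the first claim.

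For the moreover part, the assumption $\supp~\hA = \supp~\hB$ is symmetric in $\hA$ and $\hB$, so applying the first part once in each direction gives the two inclusions that combine into $\supp(\hC\hA\hC^\dag) = \supp(\hC\hB\hC^\dag)$.

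There is no real obstacle here; the only thing to be careful about is not confusing $\supp~\hB$ with $\Image~\hB^{1/2}$ versus $\Image~\hB$ (they coincide for positive semidefinite $\hB$), and noting that the equivalence $\Ker~\hB = \Ker~\hB^{1/2}$ is what makes the argument go through. An alternative one-line route, if desired, is to observe that $\supp~\hA \subseteq \supp~\hB$ implies $\lambda\hB \ge \hA$ for some $\lambda > 0$, hence $\lambda\hC\hB\hC^\dag \ge \hC\hA\hC^\dag \ge 0$, from which the support inclusion is immediate; but the kernel-pullback argument above seems cleaner and does not require invoking the operator-domination lemma.
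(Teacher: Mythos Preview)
Your proof is correct and is essentially identical to the paper's own argument: the paper also passes to kernels, takes $\ket{x}\in\Ker(\hC\hB\hC^\dag)$, deduces $\hB^{1/2}\hC^\dag\ket{x}=0$, uses $\Ker~\hB\subseteq\Ker~\hA$ to get $\hA^{1/2}\hC^\dag\ket{x}=0$, and concludes $\ket{x}\in\Ker(\hC\hA\hC^\dag)$, then handles the equality case by symmetry. The alternative domination argument you mention is a nice aside, but your main route matches the paper step for step.
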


\begin{proof}
$\supp~\hA \subseteq \supp~\hB$ gives $\Ker~\hA \supseteq \Ker~\hB$.
We obtain
\begin{eqnarray}
  \ket{x} \in \Ker(\hC\hB\hC^\dag) &\Longrightarrow& \hB^\half\hC^\dag\ket{x} = 0 \nonumber \\
  &\Longrightarrow& \hC^\dag\ket{x} \in \Ker~\hB \nonumber \\
  &\Longrightarrow& \hC^\dag\ket{x} \in \Ker~\hA \nonumber \\
  &\Longrightarrow& \hA^\half\hC^\dag\ket{x} = 0 \nonumber \\
  &\Longrightarrow& \ket{x} \in \Ker(\hC\hA\hC^\dag),
\end{eqnarray}
which indicates $\Ker(\hC\hA\hC^\dag) \supseteq \Ker(\hC\hB\hC^\dag)$,
i.e., $\supp(\hC\hA\hC^\dag) \subseteq \supp(\hC\hB\hC^\dag)$.
If $\supp~\hA = \supp~\hB$, then, from $\supp~\hA \subseteq \supp~\hB$ and $\supp~\hA \supseteq \supp~\hB$,
$\supp(\hC\hA\hC^\dag) = \supp(\hC\hB\hC^\dag)$ obviously holds.
\QED
\end{proof}

\section{Supplement of Theorem~\ref{thm:nas}} \label{append:Pi_E}

Let $\hP_{\hX}$ be the projection operator onto the support space
of a positive semidefinite operator $\hX$; i.e., $\hP_{\hX} = \Proja(\hX)$.

For any POVM $\Pi = \{ \hPi_m \}_{m \in \mI_M}$, define $\hE_m$ as
\begin{eqnarray}
 \hE_m &=& (\hA_m^\inv)^\dag \left( \sum_{k=0}^{m-1} \hPi_k \right) \hA_m^\inv, \nonumber \\
 & & ~~~ m \in \{ 1, 2, \cdots, M-1 \}, \label{eq:nas_E}
\end{eqnarray}
where $\hA_m$ is defined as Eq.~(\ref{eq:nas_Am}),
and $\hA^\inv$ denotes the Moore-Penrose inverse operator of $\hA$.
Now, we show that Eq.~(\ref{eq:Pi_A}) and $\ident \ge \hE_m \ge 0$ hold.

First, we show Eq.~(\ref{eq:Pi_A}).
From $\hA_m^\inv\hA_m = \hP_{|\hA_m|}$,
we have that for any $m \in \mI_{M-1}$,
\begin{eqnarray}
 |\hA_m|^2 &=& \hA_{m+1}^\dag\hE_{m+1}\hA_{m+1}
  = \hP_{|\hA_{m+1}|} \left( \sum_{k=0}^m \hPi_k \right) \hP_{|\hA_{m+1}|},
  \nonumber \\ \label{eq:Pi_E_A2_0}
\end{eqnarray}
where the first equality follows from Eq.~(\ref{eq:nas_Am}).
Using Eq.~(\ref{eq:Pi_E_A2_0}), we can show
\begin{eqnarray}
 |\hA_m|^2 &=& \sum_{k=0}^m \hPi_k, ~ \forall m \in \mI_M \label{eq:Pi_E_A2}
\end{eqnarray}
by induction as follows.
The case of $m = M-1$ is obvious.
Assume that Eq.~(\ref{eq:Pi_E_A2}) holds when $m = t + 1$ with $t \in \mI_{M-1}$;
we have
\begin{eqnarray}
 \supp~\hP_{|\hA_{t+1}|} &=& \supp~|\hA_{t+1}|^2 = \supp \left(\sum_{k=0}^{t+1} \hPi_k \right) \nonumber \\
 &\supseteq& \supp \left(\sum_{k=0}^t \hPi_k \right),
\end{eqnarray}
which yields $\hP_{|\hA_{t+1}|} \left( \sum_{k=0}^t \hPi_k \right) \hP_{|\hA_{t+1}|} = \sum_{k=0}^t \hPi_k$.
Thus, Eq.~(\ref{eq:Pi_E_A2}) also holds when $m = t$.
% Considering the cases of $m = M-2, M-3, \cdots, 0$ gives
% $\supp(\sum_{k=0}^m \hPi_k) \subseteq \supp~\hP_{|\hA_{m+1}|}$
% for any $m \in \mI_{M-1}$.
% It thus follows that
% %
% \begin{eqnarray}
%  |\hA_m|^2 &=& \sum_{k=0}^m \hPi_k. \label{eq:Pi_E_A2}
% \end{eqnarray}
% %
% Obviously, Eq.~(\ref{eq:Pi_E_A2}) also holds when $m = M-1$.
Equation~(\ref{eq:Pi_A}) is readily obtained from Eq.~(\ref{eq:Pi_E_A2}).

Next, we show $\ident \ge \hE_m \ge 0$.
$\hE_m \ge 0$ obviously holds, so we only need to show $\ident \ge \hE_m$.
From Eq.~(\ref{eq:Pi_A}), $|\hA_{m-1}|^2 \le |\hA_m|^2$ holds.
Premultiplying and postmultiplying $|\hA_{m-1}|^2 \le |\hA_m|^2$
with $(\hA_m^\inv)^\dag$ and $\hA_m^\inv$, respectively,
and using $\hA_m\hA_m^\inv = \hP_{\hA\hA^\dag}$,
we have that for any $m$ with $1 \le m \le M-1$,
\begin{eqnarray}
 \hE_m &=& (\hA_m^\inv)^\dag |\hA_{m-1}|^2 \hA_m^\inv \le (\hA_m^\inv)^\dag |\hA_m|^2 \hA_m^\inv \nonumber \\
 &=& (\hA_m\hA_m^\inv)^\dag (\hA_m\hA_m^\inv) = \hP_{\hA\hA^\dag} \le \ident.
\end{eqnarray}

\section{Proof of Corollary~\ref{cor:nas_supp}} \label{append:cor_nas_supp}

The necessity is obvious from Theorem~\ref{thm:nas}.
We prove the sufficiency as follows.
Assume $\PCUp = \PCopt$.
We choose $\{ \hE_k \}_{k=1}^{M-1}$ satisfying Eqs.~(\ref{eq:nas_Pi_cond}) and (\ref{eq:nas_commute})
(such $\{ \hE_k \}_{k=1}^{M-1}$ exists from Theorem~\ref{thm:nas}).
To show Eq.~(\ref{eq:nas_commute2}),
it is sufficient to show that the following equations hold for any $m$ with $1 \le m \le M-1$:
\begin{eqnarray}
 \hA_m (\hX_{m-1} - \hrho_m)_+ \hA_m^\dag &=& \ha_m (\hX_{m-1} - \hrho_m)_+ \ha_m^\dag,
  \label{eq:nas_supp_Aa1} \\
 \hA_m (\hX_{m-1} - \hrho_m) \hA_m^\dag &=& \ha_m (\hX_{m-1} - \hrho_m) \ha_m^\dag,
  \label{eq:nas_supp_Aa2}
\end{eqnarray}
where $\hA_m$ is defined by Eq.~(\ref{eq:nas_Am}).
$(\hX_{m-1} - \hrho_m)_+$, $\hX_{m-1}$, and $\hrho_m$ are positive semidefinite operators
whose support spaces are subspaces of $\supp~\hX_m$.
Thus, if
\begin{eqnarray}
 \hA_m\hx\hA_m^\dag &=& \ha_m\hx\ha_m^\dag,
  ~~~ \forall \hx \ge 0 ~{\rm s.t.}~\supp~\hx \subseteq \supp~\hX_m \nonumber \\
 \label{eq:nas_supp_AxA}
\end{eqnarray}
for any $m$ with $1 \le m \le M-1$,
then substituting $(\hX_{m-1} - \hrho_m)_+$, $\hX_{m-1}$, and $\hrho_m$ into
$x$ in Eq.~(\ref{eq:nas_supp_AxA}) gives
Eqs.~(\ref{eq:nas_supp_Aa1}) and (\ref{eq:nas_supp_Aa2}).
Therefore, it suffices to show that Eq.~(\ref{eq:nas_supp_AxA}) holds for any $m$ with $1 \le m \le M-1$.

In preparation for proving it,
we show that if Eq.~(\ref{eq:nas_supp_AxA}) holds for a certain $m$ with $1 \le m \le M-1$,
then we have that for any Hermitian operator $\hy$ with $\supp~\hy \subseteq \mR_m$,
\begin{eqnarray}
 \hE_m^\half\hy\hE_m^\half &=& \he_m\hy\he_m, \label{eq:nas_supp_EyE}
\end{eqnarray}
where $\mR_m = \supp(\ha_m\hX_m\ha_m^\dag)$.
Let $\hT_m = \ha_m(\hX_{m-1} - \hrho_m)\ha_m^\dag$; then,
from Eq.~(\ref{eq:suppX}), we have $\mR_m = \supp~\hT_m$.
Let $\ul{\he_m} = \Proja(\hT_m)$.
Recall $\he_m = \Projb(\hT_m)$.
For any $\he$ with $\he_m \ge \he \ge \ul{\he_m}$,
$\supp~\Delta\he \subseteq \Ker~\hT_m$ holds,
where $\Delta\he = \he - \ul{\he_m}$,
which indicates $\supp~\Delta\he$ is perpendicular to $\mR_m$.
Thus, for any Hermitian operator $\hy$ with $\supp~\hy \subseteq \mR_m$,
from $\Delta\he \hy = \hy \Delta\he = 0$, we obtain
\begin{eqnarray}
 \he\hy\he &=& (\ul{\he_m} + \Delta\he)\hy(\ul{\he_m} + \Delta\he) = \ul{\he_m}\hy\ul{\he_m}.
  \label{eq:nas_supp_EyE2}
\end{eqnarray}
In contrast, $m$ satisfying Eq.~(\ref{eq:nas_supp_AxA}) also satisfies Eq.~(\ref{eq:nas_supp_Aa2}),
which yields $\he_m = \Projb[\hA_m (\hX_{m-1} - \hrho_m) \hA_m^\dag]$
and $\ul{\he_m} = \Proja[\hA_m (\hX_{m-1} - \hrho_m) \hA_m^\dag]$.
Accordingly, from Eq.~(\ref{eq:nas_Pi_cond}), $\he_m \ge \hE_m \ge \ul{\he_m}$ holds;
thus, $\he_m^\half \ge \hE_m^\half \ge \ul{\he_m^\half}$ holds.
From $\he_m^\half = \he_m$ and $\ul{\he_m^\half} = \ul{\he_m}$,
this gives $\he_m \ge \hE_m^\half \ge \ul{\he_m}$.
Therefore, substituting $\he = \he_m$ and $\he = \hE_m^\half$ into Eq.~(\ref{eq:nas_supp_EyE2})
gives
\begin{eqnarray}
 \hE_m^\half\hy\hE_m^\half &=& \ul{\he_m}\hy\ul{\he_m} = \he_m\hy\he_m,
\end{eqnarray}
i.e., Eq.~(\ref{eq:nas_supp_EyE}) holds.

We prove Eq.~(\ref{eq:nas_supp_AxA}) for any $m$ with $1 \le m \le M-1$ by induction on $m$.
This is obvious for $m = M-1$, since $\hA_{M-1} = \ha_{M-1} = \ident$ holds.
Assume that, for a certain $m = k+1 \le M-1$, Eq.~(\ref{eq:nas_supp_AxA}) holds.
For any $\hx \ge 0$ with $\supp~\hx \subseteq \supp~\hX_k$, we obtain
\begin{eqnarray}
 \hA_k\hx\hA_k^\dag &=& \hE_{k+1}^\half\hA_{k+1}\hx\hA_{k+1}^\dag\hE_{k+1}^\half \nonumber \\
 &=& \hE_{k+1}^\half\ha_{k+1}\hx\ha_{k+1}^\dag\hE_{k+1}^\half \nonumber \\
 &=& \he_{k+1}\ha_{k+1}\hx\ha_{k+1}^\dag\he_{k+1} \nonumber \\
 &=& \ha_k\hx\ha_k^\dag,
\end{eqnarray}
where the second line follows from $\supp~\hx \subseteq \supp~\hX_k \subseteq \supp~\hX_{k+1}$
and Eq.~(\ref{eq:nas_supp_AxA}) with $m = k+1$.
The third line follows from
$\supp(\ha_{k+1}\hx\ha_{k+1}^\dag) \subseteq \supp(\ha_{k+1}\hX_{k+1}\ha_{k+1}^\dag) = \mR_{k+1}$,
which is obtained by $\supp~\hx \subseteq \supp~\hX_{k+1}$ and Lemma~\ref{lemma:supp},
and from Eq.~(\ref{eq:nas_supp_EyE}) with $m = k+1$ and $\hy = \ha_{k+1}\hx\ha_{k+1}^\dag$.
Therefore, Eq.~(\ref{eq:nas_supp_AxA}) holds for $m = k$.
\QED

\section{Proof of Proposition~\ref{pro:g}} \label{append:g}

(1) We have
\begin{eqnarray}
 \lambda_{\min}(\hG^{-1/2} \hX^\sopt \hG^{-1/2}) \ge a
  &\iff& \hG^{-1/2} \hX^\sopt \hG^{-1/2} \ge a \ident \nonumber \\
 &\iff& \hX^\sopt \ge a \hG. \label{eq:XaG}
\end{eqnarray}
From Eq.~(\ref{eq:sa}), $s(a) = \Tr~\hX^\sopt - a p$ holds when $\hX^\sopt \ge a \hG$.
Moreover, $\hX^\sopt \ge \hrho_m$ for any $m \in \mI_M$ gives
\begin{eqnarray}
 \hX^\sopt - \frac{\hG}{M} &=& \frac{1}{M} \sum_{m=0}^{M-1}(\hX^\sopt - \hrho_m) \ge 0.
\end{eqnarray}
Thus, from Eq.~(\ref{eq:XaG}) with $a = 1/M$,
$1/M \le \lambda_{\min}(\hG^{-1/2} \hX^\sopt \hG^{-1/2})$.

(2) We have
\begin{eqnarray}
 a \ge \lambda_{\max}(\hG^{-1/2} \hX^\sopt \hG^{-1/2})
  &\iff& a \ident \ge \hG^{-1/2} \hX^\sopt \hG^{-1/2} \nonumber \\
 &\iff& a \hG \ge \hX^\sopt. \label{eq:XaG2}
\end{eqnarray}
Thus, $a\hG - \hX^\sopt \ge 0$.
From Eq.~(\ref{eq:sa}) and $\Tr~\hG = 1$, we have
\begin{eqnarray}
 s(a) &=& \Tr~\hX^\sopt + \Tr(a \hG - \hX^\sopt) - a p = a (1 - p).
\end{eqnarray}

(3) For any $t$ with $0 \le t \le 1$ and $a, a' \in \Real_+$,
substituting $\hA = t (a\hG - \hX^\sopt)$ and $\hB = (1-t) (a'\hG - \hX^\sopt)$ into
Lemma~\ref{lemma:TrAB} gives
\begin{eqnarray}
 \lefteqn{ t \Tr(a\hG - \hX^\sopt)_+ + (1-t) \Tr(a'\hG - \hX^\sopt)_+ } \nonumber \\
 &\ge& \Tr[[ta + (1-t)a']\hG - \hX^\sopt]_+,
\end{eqnarray}
where we use $\hA + \hB = [ta + (1-t)a']\hG - \hX^\sopt$.
Therefore, from Eq.~(\ref{eq:sa}), $ts(a) + (1-t)s(a') \ge s[ta + (1-t)a']$ obviously holds;
i.e., $s(a)$ is convex.
\QED

\section{Proof of Proposition~\ref{pro:sa}} \label{append:sa}

Let $\hPhi_a = \Proja(a\hG - \hX^\sopt)$ and $\hPhi_a^+ = \Projb(a\hG - \hX^\sopt)$;
then, $\tp(a) = \Tr(\hG\hPhi_a)$ and $\tp^+(a) = \Tr(\hG\hPhi_a^+)$ hold.

(1) For any $a, a' \in \Real_+$ with $a < a'$, we have
\begin{eqnarray}
 (a'-a) \Tr(\hG\hPhi_{a'}) &\ge& (a'-a) \Tr(\hG\hPhi_a^+), \label{eq:sa_TrGPhi}
\end{eqnarray}
which follows from substituting $\hA = a'\hG - \hX^\sopt$ and $\hB = a\hG - \hX^\sopt$ into
Lemma~\ref{lemma:TrABApBp}.
Dividing both sides of Eq.~(\ref{eq:sa_TrGPhi}) by $a'-a$ yields $\tp(a') \ge \tp^+(a)$.
In contrast, since $\tp(b) \le \tp^+(b)$ for any $b \in \Real_+$,
we obtain
\begin{eqnarray}
 \tp(a) &\le& \tp^+(a) \le \tp(a') \le \tp^+(a'),
\end{eqnarray}
which indicates that $\tp(a)$ and $\tp^+(a)$ monotonically increase with respect to $a$.

(2) First, we show $\tp(a^\opt) \le p \le \tp^+(a^\opt)$,
where $a^\opt \in \argmin_a s(a)$.
The dual problem of problem (\ref{eq:inc_dual2t}) is expressed as
(see Ref.~\cite{Nak-Kat-Usu-2015-general}):
\begin{eqnarray}
 \begin{array}{ll}
  {\rm maximize} & \Tr[\hX^\sopt(\ident - \hPhi)] \\
  {\rm subject~to} & \ident \ge \hPhi \ge 0, \Tr(\hG\hPhi) = p. \\
 \end{array} \label{eq:inc_main2t}
\end{eqnarray}
Let $\hPhi^\opt$ be an optimal solution to problem (\ref{eq:inc_main2t}).
Since the optimal value of problem (\ref{eq:inc_dual2t}), $s(a^\opt)$,
is equivalent to the optimal value of problem (\ref{eq:inc_main2t}),
we have
\begin{eqnarray}
 s(a^\opt) &=& \Tr[\hX^\sopt(\ident - \hPhi^\opt)] \nonumber \\
 &=& \Tr[\hX^\sopt(\ident - \hPhi^\opt)] + \Tr(a\hG\hPhi^\opt) - ap, \label{eq:saopt}
\end{eqnarray}
where the second line follows from $\Tr(\hG\hPhi^\opt) = p$.
In contrast, $s(a) + ap$ is equivalent to the optimal value of
problem (\ref{eq:AB_dual}) with $\hA = a\hG$ and $\hB = \hX^\sopt$.
Thus, from Eq.~(\ref{eq:AB_main}),
we have that for any operator $\hPhi$ with $\ident \ge \hPhi \ge 0$,
\begin{eqnarray}
 s(a) + ap &\ge& \Tr[\hX^\sopt(\ident - \hPhi)] + \Tr(a\hG\hPhi). \label{eq:saopt2}
\end{eqnarray}
From Lemma~\ref{lemma:AB_dual},
if the equality in Eq.~(\ref{eq:saopt2}) holds, then $\hPhi_a^+ \ge \hPhi \ge \hPhi_a$ holds.
Thus, Eq.~(\ref{eq:saopt}) gives
$\hPhi_{a^\opt}^+ \ge \hPhi^\opt \ge \hPhi_{a^\opt}$.
Multiplying both sides by $\hG$ and taking the trace gives
$\tp(a^\opt) \le p \le \tp^+(a^\opt)$.

Next assume that $\tp(a) \le p \le \tp^+(a)$; we show that $a$ minimizes $s(a)$.
Since problem (\ref{eq:inc_main2t}) is the dual problem of problem (\ref{eq:inc_dual2t}),
we have that for any operator $\hPhi$ with $\ident \ge \hPhi \ge 0$
and $\Tr(\hG\hPhi) = p$,
\begin{eqnarray}
 s(a) &\ge& s(a^\opt) \ge \Tr[\hX^\sopt(\ident - \hPhi)].
\end{eqnarray}
Thus, to prove that $a$ minimizes $s(a)$, i.e., $s(a) = s(a^\opt)$,
it suffices to find $\hPhi$ with
$\ident \ge \hPhi \ge 0$ and $\Tr(\hG\hPhi) = p$ such that $s(a) = \Tr[\hX^\sopt(\ident - \hPhi)]$.
We show that $\hPhi = c \hPhi_a + (1-c) \hPhi_a^+$ is such a value,
where $c = 1$ if $\tp(a) = \tp^+(a)$; otherwise, $c = [\tp^+(a) - p] / [\tp^+(a) - \tp(a)]$.
Note that $c$ obviously satisfies $0 \le c \le 1$.
It is easily seen that $\ident \ge \hPhi \ge 0$ and $\Tr(\hG\hPhi) = p$ hold.
From $\hPhi_a^+ \ge \hPhi_a$, $\hPhi_a^+ \ge \hPhi \ge \hPhi_a$ holds.
Substituting $\hA = a\hG$ and $\hB = \hX^\sopt$ into Lemma~\ref{lemma:AB_dual}
and using Eq.~(\ref{eq:AB_main}) gives
\begin{eqnarray}
 \hspace{-1em}
  \Tr~\hX^\sopt + \Tr(a\hG - \hX^\sopt)_+ &=& \Tr[\hX^\sopt(\ident - \hPhi)] + \Tr(a\hG\hPhi) \nonumber \\
 &=& \Tr[\hX^\sopt(\ident - \hPhi)] + ap,
\end{eqnarray}
where the second line follows from $\Tr(\hG\hPhi) = p$.
Therefore, from Eq.~(\ref{eq:sa}), $s(a) = \Tr[\hX^\sopt(\ident - \hPhi)]$ holds.
\QED

%\bibliography{quant}
%merlin.mbs apsrev4-1.bst 2010-07-25 4.21a (PWD, AO, DPC) hacked
%Control: key (0)
%Control: author (72) initials jnrlst
%Control: editor formatted (1) identically to author
%Control: production of article title (-1) disabled
%Control: page (0) single
%Control: year (1) truncated
%Control: production of eprint (0) enabled
%

\end{document}